\documentclass[a4paper, 11pt]{article}
\usepackage[utf8]{inputenc}
\usepackage{amsthm, booktabs}
\usepackage{amsmath}
\usepackage{amssymb}
\usepackage{graphicx}
\usepackage{rotating}
\usepackage[round]{natbib}
\usepackage{bbm}
\usepackage{textcomp}
\usepackage{mathtools}
\usepackage{url}
\usepackage[margin=2.5cm]{geometry}
\usepackage{mathrsfs} 
\usepackage{xcolor}
\usepackage{csquotes}
\usepackage{float}
\usepackage{pdfpages}
\usepackage{comment}

\usepackage{algorithm}
\usepackage{algpseudocode}

\theoremstyle{plain}  
\newtheorem{thm}{Theorem}[section] 
\newtheorem{lem}[thm]{Lemma} 
\newtheorem{prop}[thm]{Proposition}

\theoremstyle{definition}

\newtheorem{simulation}{Simulation}

\theoremstyle{remark} 
\newtheorem{rem}{Remark}

\theoremstyle{plain} 
\newcommand{\thistheoremname}{}
\newtheorem{genericthm}[thm]{\thistheoremname}

\newtheorem*{genericthm*}{\thistheoremname}
\newenvironment{namedthm*}[1]
{\renewcommand{\thistheoremname}{#1}%
	\begin{genericthm*}}
	{\end{genericthm*}}

\newcommand{\prob}{\mathbb{P}} 
\newcommand{\E}{\mathbb{E}} 
\newcommand{\F}{\mathcal{F}} 
\newcommand{\X}{\mathcal{X}} 
\newcommand{\PR}{\mathcal{P}(\mathbb{R})} 

\DeclareMathOperator*{\argmax}{arg\,max} 
\newcommand{\diff}{\,\mathrm{d}} 

\newcommand{\R}{\mathbb{R}} 
\newcommand{\Q}{\mathbb{Q}} 
\newcommand{\N}{\mathbb{N}} 

\newcommand{\probM}{\mathfrak{B}(\Omega)} 
\renewcommand{\P}{\mathcal{P}} 
\newcommand{\PX}{\mathcal{P}(\mathcal{X})} 

\newcommand{\one}{\mathbbm{1}} 

\newcommand{\M}{\mathcal{M}} 
\renewcommand{\H}{\mathcal{H}} 
\newcommand{\crps}{\textrm{CRPS}}  


\newcommand{\bA}{\boldsymbol{A}}
\newcommand{\bB}{\boldsymbol{B}}
\newcommand{\bC}{\boldsymbol{C}}

\newcommand{\bX}{\boldsymbol{X}}
\newcommand{\bDelta}{\boldsymbol{\Delta}}
\newcommand{\diagM}{\R_{0}^{m \times m}}

\newcommand{\eqand}{\quad \textrm{and}\quad } 

\newcommand{\myS}{{\rm S}} 
\newcommand{\QS}{\textrm{QS}} 
\newcommand{\BS}{\textrm{BS}} 

\newcommand{\eqforall}{\quad \textrm{for all }}

\newcommand{\ssupM}{\mathcal{M}^{\textrm{s}, \star}} 
\newcommand{\wsupM}{\mathcal{M}^{\textrm{w}, \star}} 
\newcommand{\supM}{\mathcal{M}^{\star}} 
\newcommand{\uwsupM}{\mathcal{M}^{\textrm{uw}, \star}}

\begin{document}
	\title{Sequential model confidence sets}
	
	\author{Sebastian Arnold\thanks{Both authors contributed equally to this work.} \thanks{Centrum Wiskunde \& Informatica (CWI), \texttt{sebastian.arnold@cwi.nl}} , Georgios Gavrilopoulos\footnotemark[1] \thanks{Eidgenössische Technische Hochschule Zürich (ETHZ), \texttt{ georgios.gavrilopoulos@stat.math.ethz.ch; ziegel@stat.math.ethz.ch}}, Benedikt Schulz\thanks{Karlsruhe Institute of Technology (KIT), \texttt{benedikt.schulz2@kit.edu}}, Johanna Ziegel\footnotemark[3]}

	\maketitle

	\begin{abstract}
		In most prediction and estimation situations, scientists consider various statistical models for the same problem, and naturally want to select amongst the best.
		Hansen et al.\ (2011) provide a powerful solution to this problem by the so-called model confidence set, a subset of the original set of available models that contains the best models with a given level of confidence. Importantly, model confidence sets respect the underlying selection uncertainty by being flexible in size. However, they presuppose a fixed sample size which stands in contrast to the fact that model selection and forecast evaluation are inherently sequential tasks where we successively collect new data and 
		where the decision to continue or conclude a study may depend on the previous outcomes.
		In this article, we extend model confidence sets sequentially over time by relying on sequential testing methods through e-processes and confidence sequences. 
		Sequential model confidence sets allow to continuously monitor the models' performances and come with time-uniform, nonasymptotic coverage guarantees.
	\end{abstract}
	
	\emph{Keywords:} Model confidence set, forecast evaluation, forecast comparison, sequential inference, multiple testing.
	
	\section{Introduction}
	
	In science as well as in our daily life, we frequently encounter situations in which multiple statistical models or forecasts are available for the same problem and where we have to decide which model(s) or forecast(s) we want to trust. We might think for example of multiple regression models for inflation with respect to different sets of covariates or of multiple weather services which issue precipitation predictions for the next day. 
	In such situations, we naturally want to select the best model(s) or forecast(s), where the term \enquote{best} is defined in terms of a user-specified criterion, typically given by some real-valued loss function where lower scores correspond to better performance.
	
	The \emph{model confidence set (MCS)} proposed by \cite{MCS} provides a promising solution to this problem. It departs from standard practice where just a single model is selected according to some appropriate loss and nothing is said about the uncertainty associated with this selection. The MCS takes this uncertainty into account by reducing the original set of available models to a smaller set of flexible size that contains the best models with a given level of confidence.
	Model confidence sets are of great importance in applications without an obvious benchmark, may be easily constructed by the MCS algorithm, and are widely applied in the econometrics community; see, e.g., \cite{Nan} and \cite{peng2018best} amongst others.
	
	However, even though the MCS provides an appealing solution to a highly important problem, it comes with limitations. In particular, it is assumed that the sequences of observed loss differences are stationary, an assumption which may be questionable in practice since we may expect that the models evolve over time, and correct for errors and systematic biases by using past information. Apart from relying on strong stationarity assumptions, the MCS requires a sample of some fixed size chosen independently of the data.
	In other words, we assume that we observe losses of the models over some prespecified evaluation period in order to compute the MCS only once at the end of the period. This procedure highly contrasts our natural urge to assess the models on a regular basis by successively including past observations to the available dataset. Consider for example different weather institutions that predict the accumulated precipitation for the following day on a daily-basis.
	Then, firstly, it is not given that the initially better models remain superior until the end of our study.  
	Secondly, we want to assess the forecasters and monitor their performance sequentially, say, at the end of each day, week or month, since the forecasters issue their forecasts sequentially as well and may behave nonstationarily. If we gather evidence that one institution is worse than the others but the evidence is not yet statistically significant, then we want to continue to collect evidence for this hypothesis without losing the information of the earlier observations. Crucially, we want to decide at the end of each day, week or month whether to continue or conclude the study, depending on the previous outcomes.
	Finally, the derivation of the MCS is based on bootstrap approximations and asymptotic, CLT-related properties.
	In addition to high computational costs, coverage guarantees are only asymptotic.
	
	Selecting models and comparing forecasts are inherently sequential tasks, and we should seek methods that allow for a sequential choice of the best models while respecting the underlying selection uncertainty. In this article, we address this need and contribute with sequential model confidence sets. 
	Sequential model confidence sets generalize the core idea of \cite{MCS} sequentially over time and enjoy time-uniform coverage guarantees of the theoretically superior objects. 
	Moreover, they have non-asymptotic validity and do not rely on computationally expensive approximations.
	Our methods rely on e-processes and time-uniform confidence sequences, which 
	lay the foundation for safe anytime-valid inference, a field which met a surge of new contributions over the last years by various authors; see, e.g., \cite{Shafer2021}, \cite{Vovk_Wang_2021}, \cite{ramdas2023gametheoretic} and \cite{GrunwaldHeideETAL2019} amongst others. 
	
	Sequential testing methods relying on minimal distributional assumptions have been used before for sequential forecast evaluation: \cite{HenziZiegel2021} provide e-processes to sequentially test the strong hypothesis that one forecaster is better than another at all time points.
	In contrast, \cite{Choe_Ramdas_2021} develop e-processes for the weaker hypothesis that one forecaster is uniformly better than the other on average.
	Both studies assess the performance of some forecaster at time $t\in \N$ with respect to the (average) expected loss difference given all information up to the previous time step.
	This is relevant to tests of equal conditional predictive ability proposed by \cite{Giacomnini_White_2006}.
	In contrast to the unconditional approach, most prominently in the seminal work of \cite{Diebold_Mariano_1995}, the conditional approach uses the available information to infer which forecaster is more accurate on a specific date.
	In this article, we follow the conditional approach and study the strong as well as the weak hypothesis.
	In contrast to all aforementioned contributions on forecast comparison, our methods allow to simultaneously draw inference on the performance of $m\geq 2$ different forecasters rather than only being able to compare two different ones. 
	
	In order to construct sequential model confidence sets, we make use of the e-processes provided by \cite{Howard_Chernoff_bounds} and \cite{Choe_Ramdas_2021} under the assumption of sub-Gaussian or sub-exponential loss differences, which subsumes in particular the case where the loss differences are conditionally bounded, an assumption, which does not only hold for bounded losses but also for prominent choices of scoring rules such as the continuous ranked probability score (CRPS; \citealp{Matheson1976}) or the quantile score \citep{Gneiting2011a}.

	The remainder of the paper is structured as follows. Section \ref{Sec:SMCS} presents the problem and introduces sequential model confidence sets. In Section \ref{Sec:Construction_of_SMCS}, we review important definitions from sequential testing theory and provide methods to construct sequential model confidence sets with respect to three particular notions of superior objects. A simulation study is conducted in Section \ref{Sec:Simulation} before we apply the proposed methods in case studies on Covid-19 related deaths and wind gust predictions in Section \ref{Sec:Case_study}. The main part of the paper closes with a discussion in Section \ref{Sec:Discussion}. Some background on forecast evaluation, technical comments, and proofs are available in the supplementary material.

	\section{Sequential model confidence sets}\label{Sec:SMCS}
	We consider a set $\M_0=\{1, \dots, m\}$ indexing $m\geq 2$ different statistical models or forecasters, simply referred to as \emph{models}, and we let $L$ be a loss function which measures the quality of the models and is negatively oriented, that is, lower values correspond to better performance. We collect data in discrete time and denote the loss of model $i\in \M_0$ at time $t\in \N$ by $L_{i,t}$. We assume that all random quantities are defined on some underlying measurable space $(\Omega, \F)$ equipped with some filtration $(\F_t)_{t \in \N}$. 
	We denote the family of all probability measures on $(\Omega, \F)$ by $\probM$ and write $\Q$ for a generic element in $\probM$. 
	
	Our leading example of the above setting is the following. Let $1, \dots, m$ index different forecasters which all sequentially issue predictive distributions or point forecasts $(f_{1,t})_{t \in \N}, \dots ,(f_{m,t})_{t \in \N}$ for some unknown quantity $(Y_t)_{t \in \N}$, adapted to \((\mathcal{F}_t)_{t\in \N}\).
	For each $i=1, \dots, m$ and $t\in \N$, the forecast $f_{i,t}$, which refers to the outcome \(Y_t\), is \emph{predictable}, meaning that it is only based on information until time \(t-1\). To assess the forecasters, one should employ \emph{proper scoring rules} 
	or \emph{consistent scoring functions}, see Supplement \ref{appendix:forecast_evaluation} for further details. Then, the time series $(L_{i,t})_{t \in \N}$ emerges by applying a proper scoring rule (consistent scoring function) $L$ to the forecast-observation pairs $(f_{i,t},Y_t)_{t \in \N}$, that is $L_{i,t}=L(f_{i,t},Y_t)$, $i=1, \dots, m$, $t\in \N$. 
	
	Next, we clarify the notion of ``best models''. For $i, j\in \M_0$ and $t\in \N$, consider $d_{ij,t}=L_{i,t}-L_{j,t}$ and $\mu_{ij, t}=\E(d_{ij,t} \mid \F_{t-1} )$ as well as their averages $\hat{\Delta}_{ij,t}= \sum_{s=1}^t d_{ij, s}/t$ and $\Delta_{ij,t}=\sum_{s=1}^t \mu_{ij, s}/t$, where we omit the dependence of $\mu_{ij,t}$ and $\Delta_{ij,t}$ on the choice of the probability measure $\Q$ on $(\Omega, \F)$ for the sake of brevity.
	
	We follow \cite{Giacomnini_White_2006}, \cite{Lai_et_al_2011}, \cite{HenziZiegel2021} and \cite{Choe_Ramdas_2021} by defining the superior models in terms of the (average) conditional expected loss differences. 
	In particular, we define the \emph{strongly superior objects} \begin{equation}
		\label{eq:def_strongly_superior_objects}
		\ssupM= \left\{i \in \M_0 \mid \mu_{ij,t}\leq 0  \textrm{ a.s.~for all }  j \in \M_0,\, t \in \N\right\}, 
	\end{equation}
	the \emph{uniformly weakly superior objects}
	\begin{equation}
		\label{eq:def_uniformly_weakly_superior_objects}
		\uwsupM= \left\{i \in \M_0 \mid \Delta_{ij,t}\leq 0 \textrm{ a.s.~for all } j \in \M_0,\, t \in \N\right\}, 
	\end{equation}
	and the \emph{weakly superior objects}\begin{equation}\label{eq:def_weakly_superior_objects}
		\wsupM_t= \left\{i \in \M_0 \mid \Delta_{ij,t}\leq 0 \textrm{ a.s.~for all } j \in \M_0\right\}, \quad t\in \N.
	\end{equation}
	For each $t\in \N$, we have $\ssupM \subseteq \uwsupM  \subseteq \wsupM_t$. Importantly, $\wsupM_t \neq \emptyset $ for all $t\in \N$, whereas $\ssupM$ and $\uwsupM$ may be empty.
	
	Let $(\supM_t)_{t\in \N}\subseteq \M_0$ be the targeted sequence of superior objects. In the case of \eqref{eq:def_strongly_superior_objects} or \eqref{eq:def_uniformly_weakly_superior_objects}, we just consider a constant sequence. For a given confidence level $\alpha \in (0,1)$, we call $(\widehat{\M}_t)_{t \in \N} \subseteq \M_0$ a \emph{sequence of model confidence sets} or \emph{sequential model confidence sets}, for short SMCS(s), for $(\supM_t)_{t\in \N}$ at level $\alpha$ if, for any $\Q\in \probM$, $\Q(\exists t\geq 1:  \supM_t  \nsubseteq \widehat{\M}_t ) \leq \alpha$, or equivalently 
	\begin{equation}\label{eq:def_coverage}
		\Q(\forall t\geq 1: \supM_t  \subseteq \widehat{\M}_t  ) \geq 1-\alpha.
	\end{equation}
	We refer to this property as a \emph{time-uniform coverage guarantee} for the sequential model confidence sets and highlight the fact that the time quantifiers are inside the probability, which is a much stronger requirement than assuming $\Q(\supM_t  \subseteq \widehat{\M}_t) \geq 1-\alpha$ for all $t\in \N$ (or assuming that there exists a $t\in \N$ with $\Q(  \supM_t  \nsubseteq \widehat{\M}_t ) \leq \alpha$); see, e.g., \citet[Section 1.1]{Howard_confidence_sequences}, for a historical perspective on confidence sequences.
	
	In the paper by \cite{MCS}, the authors make the assumption that the (unconditional) expectation \(\E(d_{ij,t})= \mu_{ij}\) does not depend on \(t\), and they define the set of superior objects as $\M^\star = \left\{i\in \M_0\mid \mu_{ij}\leq 0 \text{ for all } j\in \M_0\right\}.$ This is in line with \cite{Diebold_Mariano_1995}, who also define model (forecast) superiority with respect to the unconditional expected loss differences. For a fixed sample size, they target an asymptotic guarantee of the form \(\Q(\M^\star\subseteq \widehat{\M})\geq 1-\alpha\) for their model confidence set (MCS) \(\widehat{\M}\). Our method provides a stronger guarantee but it is also the first method with finite sample validity for a fixed sample size.
	
	Model confidence sets are also related to the literature on inference on the argmin of a vector of means, see, e.g. \citet{FutschikPflug1995,ZhangLeeETAL2024}. In this strand of literature, the desired coverage guarantee is typically weaker in that $\Q(i \in \widehat{\M}) \ge 1-\alpha$ should hold for each superior model $i$ but not uniformly over all models. We discuss an adaptation of our methods to this so-called \emph{marginal coverage} guarantee in Supplement \ref{appendix:marginal_coverage}.

	\subsection{Discussion of the different sets of superior models}\label{Subsec:Discussion_of_superior_models}
	
	Studying different notions of superior models is important, since, depending on the particular situation, one might interpret the term \enquote{best} differently.
	In this subsection, we discuss the hypotheses which correspond to the sequences of superior models given at \eqref{eq:def_strongly_superior_objects}, \eqref{eq:def_uniformly_weakly_superior_objects}, and \eqref{eq:def_weakly_superior_objects}, respectively, and give concrete examples for each of them.

	If it is reasonable to assume that some models outperform all other models at all time points (in terms of the conditional expected score differences), then we should try to find estimators for $\ssupM$. We refer to this assumption as the \emph{strong hypothesis}. 
	The strong hypothesis particularly applies if we assume independent and identically distributed scores. Then, one typically studies the unconditional expected score differences $\E(d_{ij,t})$ independent of $t\in \N$, as it is done by \cite{Diebold_Mariano_1995} or \cite{MCS}, who assume stationary performances of the models. 
	Another important instance where the strong hypothesis is reasonable is if the models have nested information sets and use them in a nearly optimal way, see \cite{Holzmann_Eulert}.
	In Section \ref{Subsec:Covid}, we construct SMCSs for different Covid-19 related deaths where the assumption of nested information sets seems plausible. 
	
	On the other hand, if we assume that some models have a lower conditional expected score on average rather than at all single time points, then we should target $\uwsupM$ under the \emph{uniformly weak hypothesis}.
	\citet[Section 4.4.]{Choe_Ramdas_2021} 
	argue convincingly that there are many situations where testing for the strong hypothesis may be misleading.
	Inspired by their arguments, Simulation \ref{example:simulation2} below considers a forecaster which is ideal on most days and only slightly worse than the other forecasters on Sundays.
	Any powerful method to monitor the strongly superior models over time would exclude this forecaster from the SMCS. Nevertheless, the forecaster is superior under the uniformly weak hypothesis. 
	
	Finally, the \emph{weak hypothesis} applies if we expect that the models evolve and their average relative performance might change over time.
	In Section \ref{Subsec:Windgust}, we study post-processing methods for wind gust predictions. 
	The results show that in this application it is indeed beneficial to monitor the weakly superior objects, as some models which are excluded from the SMCS at earlier time points are included again at later stages, either due to systematic changes in the underlying meteorological model, or due to the different adaptive behaviour of the methods. If we had conducted the study under any of the two stronger hypotheses, we could not have observed that some methods become again competitive towards the end of the test period. 
	
	We introduce the simulation settings of Section \ref{Sec:Simulation} to give concrete examples for each of the discussed hypotheses. 
	
	\begin{simulation}\label{example:simulation1}  We sample \((Y_t)_{t=1}^{n}\), where $Y_{t}\sim \mathcal{N}(Y_{t-1},1),$ for $t =1, \dots, n$ and $Y_0=0$. We consider $m=49$ different forecasters $\{1, \dots, m\}$ which all sequentially issue predictive distributions for \((Y_t)_{t=1}^{n}\) given by $f_{i,t}=\mathcal{N}(Y_{t-1}+\varepsilon_i,1+\delta_i)$, for $(\varepsilon_i,\delta_i) \in \{-0.6,-0.4,-0.2,0,0.2,0.4, 0.6\}^2$. For these forecasters with different biases and dispersion errors, there is exactly one ideal forecaster ${i_0}$ with $\varepsilon_{i_0}=\delta_{i_0}=0$. We assess the different forecasters with respect to the \emph{continuous ranked probability score} (CRPS; \cite{Matheson1976}).
		The CRPS is a proper scoring rule implying that the ideal forecaster has the lowest expected CRPS.
		According to the definition given at \eqref{eq:def_strongly_superior_objects}, this yields that $\ssupM = \{i_0\}$.
		A brief introduction to (proper) scoring rules, including information about the CRPS, can be found in Supplement \ref{appendix:forecast_evaluation}.
	\end{simulation}
	
	\begin{simulation}\label{example:simulation2}
		We consider the same data-generating mechanism and the same forecasters as in Example \ref{example:simulation1} except that, for $t\in 7\N $, the forecaster ${i_0}$ now issues the forecasts $f_{i_0,t}=\mathcal{N}\left(Y_{t-1}+0.3,1.3\right)$. That is, forecaster $f_{i_0}$ is still ideal on, say, weekdays and Saturdays, however, on Sundays, some forecasters have a smaller bias and dispersion error, and hence $\ssupM=\emptyset$. However, ${i_0}$ is uniformly weakly superior, that is $\uwsupM=\{{i_0}\}$.
	\end{simulation}
	
	\begin{simulation}\label{example:simulation3}
		We sample i.i.d.\ standard normally distributed observations $(Y_t)_{t=1}^n$ and compare $m=3$ different forecasters $i=1,2,3$ which issue median predictions $m_{i,t}= Y_t+\varepsilon_{i,t}$, for $\varepsilon_{1,t}= \beta,  \varepsilon_{2,t}= \gamma^t, \varepsilon_{3,t}= \delta t,$ for $\beta, \delta >0$ and $0<\gamma<1$. We call forecaster $1$ \emph{constantly biased}, forecaster $2$ \emph{improving} and forecaster $3$ \emph{worsening}. We assess them by $L(m,y)=0.5(\Phi(m)-\Phi(y))$, for $m,y\in \R$ and $\Phi$ the cdf of the standard normal distribution, which is a consistent scoring function for the median, see Supplement \ref{appendix:forecast_evaluation}. If we choose the parameters as given in Section \ref{Sec:Simulation}, we have $\wsupM_t = \{3\}$ for $t\leq 153$, $\wsupM_t = \{1\}$ for $153 < t< 550$, and $\wsupM_t = \{2\}$ for $t\geq 550$. That is, whereas the worsening forecaster is initially superior, the constantly biased forecaster catches up after some time until the improving forecaster becomes the best in the end.    
	\end{simulation}

	\section{Construction of sequential model confidence sets}\label{Sec:Construction_of_SMCS}
	
	In this section, we provide methods to construct SMCSs for the superior models given at \eqref{eq:def_strongly_superior_objects}, \eqref{eq:def_uniformly_weakly_superior_objects} and \eqref{eq:def_weakly_superior_objects}. Our constructions build on e-processes, confidence sequences and sequential multiple testing methods. 
	
	\subsection{Sequential testing methods}
	
	Let  $(\Omega, \F)$ be a measurable space and $\probM$ be the family of all probability measures on $(\Omega, \F)$. Suppose that we observe random outcomes of a process at time points \(t=1,2,\ldots\). A \emph{statistical (null) hypothesis} $\H \subseteq \probM$ is a set of probability measures that are potential candidates for the true probability measure $\prob$ governing the data generating process.
	For example, $\H$ could consist of all probability measures under which the data points are i.i.d.~normally distributed with some given mean and variance.
	
	Let $(\F_t)_{t \in \N}$ be a filtration, that is, an increasing sequence of \(\sigma\)-algebras on \((\Omega,\mathcal{F})\), which are all contained in \(\mathcal{F}\).
	We interpret $\F_t$ as the available information at time $t\in \N$.
	
	A \emph{(super-)martingale} with respect to a probability measure \(\Q \in \probM\) is a sequence of \(\Q\)-integrable random variables \((X_t)_{t\in \N}\), which is adapted to \((\F_t)_{t\in \N}\) and satisfies the condition \(\E_\Q[X_t|\F_{t-1}]=X_{t-1}\) (\(\E_\Q[X_t|\F_{t-1}]\leq X_{t-1}\)), for all \(t\in \N\).
	We call a process $(M_t)_{t \in \N}$ a \emph{test (super-)martingale} for some hypothesis $\H\subseteq \probM$ if it is a nonnegative (super)martingale with respect to any \(\Q\in \H\), and has a starting value \(M_0\leq 1\).
	
	Test supermartingales allow to phrase statistical testing as a betting game. Indeed, the value of the test supermartingale can be interpreted as the accumulated wealth of a gambler after having bet a number of times against the null hypothesis \citep{Shafer2021}.
	For example, in a coin-toss game with probability of heads equal to \(q\in (0,1)\), a player that bets 1\$ on heads in each round and receives 2\$ in case of success or 0\$ otherwise, cannot expect to gain money over time if the null hypothesis \(\H_0:q\leq 1/2\) is true.
	Thus, large values of the test supermartingale give evidence against the null hypothesis.
	
	Ville's inequality \citep{Ville} states that any test supermartingale $(M_t)_{t \in \N}$ for $\H\subseteq \probM$ satisfies
	\begin{equation}\label{eq:Ville}
		\sup_{\Q \in \H} \Q\left(\exists t \in \N: M_t \geq \frac{1}{\alpha}\right) \leq \alpha, \eqforall \alpha \in (0,1).
	\end{equation}
	This lays the foundation for safe anytime-valid inference.
	A proof can be found in \cite[Section~6.1]{Howard_Chernoff_bounds}.
	By \eqref{eq:Ville}, test supermartingales yield valid sequential tests $(\psi_t)_{t \in \N}$ by defining $\psi_t = \one\{M_t \ge 1/\alpha\}$.
	Furthermore, Ville's inequality enables the construction of time-uniform confidence sequences (\citealp{Howard_confidence_sequences}; \citealp{ramdas2023gametheoretic}). 
	For $\alpha \in (0,1)$, a $(1-\alpha)$-confidence sequence for a parameter sequence $(\theta_t)_{t\in \N}$ in some space $\Theta$ is a sequence of sets $(C_t)_{t\in \N}$ such that $\Q\left(\forall t \in \N: \theta_t \in C_t\right)\geq 1-\alpha$, for all $\Q \in \probM.$
	The definition of confidence sequences resembles that of confidence sets, with the important difference that confidence sequences provide time-uniform coverage guarantees.
	We will use confidence sequences to construct SMCSs for the sequence of weakly superior objects.
	
	It has been shown that for some hypotheses \(\H\), it is not possible to construct non-trivial test supermartingales \citep{Ramdas2022eprocess, henzi&law}. However, it may still be possible to find non-trivial e-processes.
	An \emph{e-process} for some hypothesis $\H\subseteq \probM$ is a nonnegative adapted stochastic process $(E_t)_{t \in \N}$ with
	$\E_\Q (E_\tau) \leq 1$ for all $\Q\in \H$ and all (possibly infinite) stopping times $\tau$.
	Equivalently, an adapted nonnegative process $(E_t)_{t \in \N}$ is an e-process for $\H$ if, for each $\Q \in \H$, it is upper bounded by a test supermartingale. In other words, the upper bounding test supermartingale may be different for each $\Q \in \H$.
	Hence, any test supermartingale is also an e-process. Importantly, Ville's inequality continues to hold for e-processes \citep{Ramdas2020}. 
	
	When testing more than one hypothesis simultaneously with e-processes, multiple testing corrections are necessary. We discuss the relevant background in Supplement \ref{app:pSMCS}.

	\subsection{SMCSs for the strongly superior and uniformly weakly superior objects}\label{Subsec:SMCS_by_p_processes}

	In this section, we construct SMCSs for the strongly superior models $\ssupM$ given at \eqref{eq:def_strongly_superior_objects} and for the uniformly weakly superior models $\uwsupM$ given at \eqref{eq:def_uniformly_weakly_superior_objects} using e-processes and sequential testing procedures.
	
	We define the strong hypotheses
	$\H^{\textrm{s}}_{ij}= \{\Q \in \probM \mid \mu_{ij,r} \leq 0, \forall t \in \N\}$, and the weak hypotheses $\H^{\textrm{uw}}_{ij}= \{\Q \in \probM \mid \Delta_{ij,r} \leq 0, \forall t \in \N\}$ for $i,j\in\M_0$. We would like to test all these pairwise hypothesis simultaneously at each time point $t \in \N$.
	For $\bullet \in \{\mathrm{s},\mathrm{uw}\}$, assume that, for any $i,j\in\M_0$, $(E_{ij,t})_{t \in \N}$ is an e-process for the hypothesis $\H^\bullet_{ij}$.
	Then, for any $i\in \M_0$, the arithmetic mean $E_{i\cdot,t} =  1/(m-1)\sum_{j \neq i} E_{ij,t}$ is an e-process for the intersection hypothesis $\H^{\bullet}_{i\cdot} = \cap_{j\neq i} \H^\bullet_{ij}$. We adjust the e-processes $( E_{1\cdot, t})_{t\in \N}, \dots, ( E_{m\cdot,t})_{t\in \N}$ for multiple testing using the closure principle \citep{Markus_etal_1976} with the arithmetic mean as $e$-merging function, that is, 
	\begin{equation}\label{eq:adjusted_e-processes}
		E^\star_{i\cdot,t}=\min_{I \subseteq \{1, \dots, m\}: i\in I} \frac{1}{\lvert I \rvert} \sum_{j \in I} E_{j\cdot,t}\leq E_{i\cdot,t}, \quad i\in \M_0, t \in \N,
	\end{equation}
	see Supplement \ref{app:pSMCS} for further details.
	Importantly, each adjusted process \((E_{i\cdot, t}^\star)_{t\in \N}\) is an e-process for the hypothesis \(\H_{i\cdot}^{\bullet}\).
	As shown in \citet[Proposition 3.1.]{Vovk_Wang_2021}, the arithmetic mean essentially dominates all other symmetric e-merging functions.
	Algorithm 1 of \citet{Vovk_Wang_2021} allows for a computation of the minimum at \eqref{eq:adjusted_e-processes} with computation time \(O(m^2)\). In fact, the computation time may be improved to $O(m \log m)$, see Supplement \ref{app:algo1} for details.
	Finally, for some significance level $\alpha\in (0,1)$, we define
	\begin{equation}\label{eq:def_SMCS_strong_hypthesis_e_process}
		\widehat{\M}_t= \left\{i \in \M_0 \mid  E_{i\cdot,t}^{\star} <  1/\alpha
		\right \}, \quad t\in \N.
	\end{equation}
	In words, at each $t\in \N$, we exclude model $i\in \M_0$ if we may reject the hypothesis $\H^{\bullet}_{i\cdot}$ according to the sequential test $\psi_{i\cdot,t}^\star = \one\{E_{i\cdot,t}^{\star} \ge 1/\alpha\}$, and include it otherwise.
	The proof the the following theorem is given in Supplement \ref{app:pSMCS}.
	
	\begin{thm}\label{thm_SMCS_by_e-processes}
		For any $\alpha \in (0,1)$, the sequence $(\widehat{\M}_t)_{t \in \N}$ defined at \eqref{eq:def_SMCS_strong_hypthesis_e_process} is an SMCS at level $\alpha$ for $\M^{\bullet, \star}$, $\bullet \in \{\mathrm{s},\mathrm{uw}\}$, and so is its running intersection $ \widetilde{\M}_t= \bigcap_{r\leq t} \widehat{\M}_r, t\in \N$. 
	\end{thm}
	
	Theorem \ref{thm_SMCS_by_e-processes} works since we control the family-wise error rate of the tests $\psi_{i\cdot,t}^\star$ over all models $i$. Family-wise error rate control relies on the closure principle and some e-merging function, see Supplement \ref{app:pSMCS}. Alternatively, one could adjust the pairwise e-processes $E_{ij,t}$ directly to obtain adjusted e-processes for the pairwise hypotheses $ \H^{\bullet}_{ij}$ and exclude the model $i$ from the SMCS at $t\in \N$ if $E^{\star}_{ij,t}\geq1/\alpha$ for some $j\neq i$. However, the corresponding tests can be shown to be uniformly less powerful, see Supplement \ref{app:pSMCS}.

	\begin{rem} Except from the family-wise error rate, the false discovery rate (FDR) is arguably the most commonly used criterion in multiple testing \citep{BenjaminiHochberg1995,BenjaminiYekutieli2001}. Recently, \cite{Wang_Ramdas_FDR_control} studied FDR control for e-values. In principle, SMCSs could be constructed based on FDR control. However, this fundamentally changes the type of validity we are obtaining for $\widehat{\M}_t$. It allows us to bound the fraction of wrongly rejected models by the total number of rejected models at all time points, which is different from our target coverage guarantee at \eqref{eq:def_coverage}. We give some further details in Supplement \ref{appendix:FDR_control}. 
	\end{rem}
	
	\begin{rem}\label{rem_marginal_coverage}
		Our anytime-valid methods readily extend if one is willing to weaken the simultaneous guarantee given at \eqref{eq:def_coverage} by marginal coverage, as defined right before Section \ref{Subsec:Discussion_of_superior_models}, thereby allowing to omit the adjustment step for multiple testing. The same potential gain in power may be achieved under the additional assumption that there is only one single superior object. See Supplement \ref{appendix:marginal_coverage} for further details. 
	\end{rem}
	
	The following proposition provides e-processes for the strong hypotheses under the assumption of conditionally bounded loss differences.
	Together with Theorem \ref{thm_SMCS_by_e-processes}, it directly leads to the derivation SMCSs for the strongly superior objects.
	\begin{prop}\label{prop:3.2}
		Assume that $|d_{ij,t}|\leq c_{ij,t}/2$ for $i,j \in \M_0$, $t \in \N$, and some predictable sequence $(c_{ij,t})_{t\in \mathbb{N}} \subseteq (0,\infty)$. Then, for any $i,j\in \M_0$, $E_{ij,t} = 
		\prod_{r=1}^t (1+\lambda_{ij,r} d_{ij,r})$ is an e-process for $\H_{ij}^\textrm{s}$, for any predictable $(\lambda_{ij,t})_{t \in \N}$ with $0\leq \lambda_{ij,t} \leq 1/c_{ij,t}$. 
	\end{prop}
	\begin{proof} The process $ M_{ij,t} = 
		\prod_{r=1}^t (1+\lambda_{ij,r} (d_{ij,r}-\mu_{ij,r}))$ is a nonnegative martingale with $M_{ij,0}=1$. The claim follows by observing that, for $\Q \in \H^{\textrm{s}}_{ij}$, $M_{ij,t}\geq E_{ij,t}$ for all $t\in \N$.
	\end{proof}
	The following proposition is due to \citet[Theorem 3]{Choe_Ramdas_2021}, and allows to construct e-processes for the weak hypothesis under the assumption of uniformly bounded loss differences. Together with Theorem \ref{thm_SMCS_by_e-processes}, it provides SMCSs for the uniformly weakly superior objects. In Section \ref{Subsec:Predictable_bounds_instead_of_fixed_bounds}, we explain how we can weaken the assumption of uniformly bounded loss differences to conditionally bounded loss differences. 
	\begin{prop}\label{prop:e-process_choe_ramdas}
		Let $|d_{ij,t}|\leq c_{ij}/2$ for $i,j \in \M_0$, $t \in \N$, and some $c_{ij}>0$. Then, for any $i,j\in \M_0$, \begin{equation*}
			E_{ij,t} = 
			\exp\left\{\lambda_{ij} t\hat{\Delta}_{ij,t}-\psi_{E,c_{ij}}(\lambda_{ij})V_{ij,t}\right\}, \quad t\in \N,
		\end{equation*} is an e-process for $\H_{ij}^\textrm{uw}$, for any $0\leq \lambda_{ij} \leq 1/c_{ij}$ and
		\begin{equation}\label{eq:Psi_{E,c}}
			\psi_{E,c_{ij}}(\lambda)=(-\log(1-c_{ij}\lambda_{ij})-c_{ij}\lambda_{ij})/c_{ij}^2   \eqand
			V_{ij,t} = \sum_{r=1}^t (d_{ij,r}-\gamma_{ij,r})^2,
		\end{equation}
		where $(\gamma_{ij,t})_{t \in \N}$ with $\lvert \gamma_{ij,t}\lvert \leq c_{ij}/2$ is a predictable sequence.
	\end{prop}

	\subsection{SMCSs for the weakly superior objects}\label{Subsec:SMCS_for_weakly_superior_objects}
	
	Our SMCS construction for the weakly superior objects defined at \eqref{eq:def_weakly_superior_objects} builds on time-uniform confidence regions.
	Proofs for this section are given in Supplement \ref{Appendix:proofs}.
	
	For $i,j \in \M_0$, $t \in \N$, $x \in \R$, let $M_{ij,t}(x)$ be such that $(M_{ij,t}(\Delta_{ij,t}))_{t \in \N}$ is a test supermartingale for any $\Q \in \probM$. Let $\diagM$ be the family of all $m\times m$-matrices with diagonal entries equal to zero and define for $\bX=(x_{ij})_{ij} \in \diagM$,
	\begin{equation*}
		M_t(\bX) = \frac{1}{m(m-1)}\sum_{\substack{i,j=1, \dots, m \\ i\neq j}}M_{ij,t}(x_{ij}), \quad t\in \N. 
	\end{equation*}
	With $\bDelta_t = (\Delta_{ij,t})_{ij} \in \diagM$, also $(M_t(\bDelta_t))_{t \in \N}$ is a test supermartingale for any $\Q \in \probM$. 
	Therefore, for any $\alpha \in (0,1)$, $C_t = \{\bX \in \diagM \mid M_t(\bX) \le 1/\alpha\}$ is a $(1-\alpha)$-confidence sequence for $(\bDelta_t)_{t \in \N}$.  Indeed, for any $\Q \in \probM$, by Ville's inequality, 
	\[
	\Q(\exists t \in \N : \bDelta_{t} \notin C_t) = \Q(\exists t \in \N : M_{t}(\bDelta_{t}) > 1/\alpha) \leq \alpha.
	\]
	
	We construct an SMCS from $C_t$ as follows. If there exists a $j \not=i$ such that we may reject $\Delta_{ij,t} \le 0$, we exclude $i$ from the SMCS.
	This is the case, if and only if there exists $j \not=i$ such that $C_t \cap \{\bX \in \diagM \mid x_{ij} \le 0\} = \emptyset$. Therefore, we define
	\begin{equation}\label{eq:def_SMCS_by_confidence_regions}
		\widehat{\M}_t = \left\{i \in \M_0 \mid  C_t \cap \{\bX \in \diagM \mid x_{ij} \le 0\} \not= \emptyset \text{ for all }j\neq i\right\}, \quad t\in \N.
	\end{equation}
	
	\begin{thm}\label{thm:3.5}
		For any $\alpha \in (0,1)$, the sequence $(\widehat{\M}_t)_{t \in \N}$ defined at \eqref{eq:def_SMCS_by_confidence_regions} is an SMCS at level $\alpha$ for the weakly superior objects $(\wsupM_t)_{t\in \N}$. Its running intersection $ \widetilde{\M}_t= \bigcap_{r\leq t} \widehat{\M}_r, t\in \N$ is an SMCS at level $\alpha$ for the uniformly weakly superior objects $\uwsupM$. 
	\end{thm}
	
	Similarly to Remark \ref{rem_marginal_coverage}, we may construct more powerful SMCSs if we target a suitable marginal coverage guarantee, see Supplement \ref{appendix:marginal_coverage} for details. 
	
	For computation of the SMCS, it is useful to understand the shape of the sets $C_t$. The following result treats a relevant case that allows for simpler computations. 
	
	\begin{prop}\label{prop:convex_upper_set}
		Suppose that the functions $x \mapsto M_{ij,t}(x)$ are nonnegative, convex and decreasing for all $i,j \in \M_0$ and $t \in \N$.  Then, $C_t$ is a convex upper set, that is, for $\bA, \bB \in C_t$, it follows that $\lambda \bA + (1-\lambda) \bB \in C_t$ for all $\lambda \in [0,1]$, and $\bA \in C_t$ and $\bA \leq \bC$ implies that $\bC \in C_t$, where the inequality is understood componentwise. 
	\end{prop}
	
	Under the conditions of Proposition \ref{prop:convex_upper_set}, the definition of $\widehat{\M}_t$ at \eqref{eq:def_SMCS_by_confidence_regions} simplifies:
	\begin{equation}\label{eq:def_SMCS_by_confidence_regions_2}
		\widehat{\M}_t = \left\{i \in \M_0 \mid  C_t \cap \{\bX \in \diagM \mid x_{ij} = 0\} \not= \emptyset \text{ for all }j\neq i\right\}, \quad t\in \N.
	\end{equation}
	
	For uniformly bounded loss differences, the following proposition provides test supermartingales that satisfy the assumptions of Proposition \ref{prop:convex_upper_set}. It can be found in \citet[Proposition 1 of Appendix]{Choe_Ramdas_2021}. In the next section, we discuss how to accommodate conditionally bounded loss differences.
	\begin{prop}\label{prop:choe_ramdas_2}
		Let $|d_{ij,t}|\leq c_{ij}/2$ for $i,j \in \M_0$, $t \in \N$, and some $c_{ij}>0$. Then, for any $i,j\in \M_0$, $(M_{ij,t}(\Delta_{ij,t}))_{t \in \N}$ is a test supermartingale for any $\Q \in \probM$, where
		\begin{equation*}
			M_{ij,t}(x) = 
			\exp\left\{\lambda_{ij} t\hat{\Delta}_{ij,t}-\lambda_{ij}tx - \psi_{E,c_{ij}}(\lambda_{ij})V_{ij,t}\right\}, \quad t\in \N,
		\end{equation*} for $0\leq \lambda_{ij} \leq 1/c_{ij}$ and $\psi_{E,c_{ij}}$ and $V_{ij,t}$ are given at \eqref{eq:Psi_{E,c}}. 
	\end{prop}
	
	Since \(\mathcal{M}^{\text{uw},\star}=\bigcap_{t=1}^{\infty} \mathcal{M}_t^{\text{w},\star}\), one could alternatively use the running intersection of the SMCSs defined at \eqref{eq:def_SMCS_by_confidence_regions_2} to construct SMCSs for the uniformly weakly superior objects.
	However, the SMCSs defined at \eqref{eq:def_SMCS_by_confidence_regions_2} are much more expensive computationally.
	Therefore, we only compute SMCSs for uniformly weakly superior objects using Theorem \ref{thm_SMCS_by_e-processes}.
	
	\subsection{Predictable bounds and betting schemes}
	\label{Subsec:Predictable_bounds_instead_of_fixed_bounds}
	
	Our methods for the (uniformly) weakly superior objects require that the score differences are uniformly bounded over time, which is clearly more restrictive than the assumption of conditionally bounded score differences imposed for the strongly superior objects. However, one can always transform conditionally bounded score differences into uniformly bounded ones thereby modifying the underlying loss and corresponding superior objects. 
	
	Assume that $\lvert d_{ij,t}\rvert \leq c_{ij,t}/2$ for all $t\in \N$ for some predictable $(c_{ij,t})_{t\in \N}$. Then, the transformed loss differences $\Tilde{d}_{ij,t}=d_{ij,t}/c_{ij,t}, t\in \N$ are uniformly bounded with constant $c=1$ that is $\lvert \Tilde{d}_{ij,t} \rvert \leq 1/2$ for all $t\in \N$. However, if we use the transformed loss differences, we target (possibly) different sequences of superior objects $\tilde{\M}^{\textrm{uw},\star}= \{i\in \M_0 \mid  \tilde{\Delta}_{ij,t}\leq 0, \text{for all}\; j\in \M_0, t \in \N\}$ and $ 
	\tilde{\M}^{\textrm{w},\star}_t= \{i\in\M_0 \mid \tilde{\Delta}_{ij,t}\leq 0, \text{for all} \; j \in \M_0\}$ for $\tilde{\Delta}_{ij,t} = (1/t)\sum_{r=1}^t \E(\tilde{d}_{ij,r}\mid \F_{r-1})=(1/t)\sum_{r=1}^t \mu_{ij,r}/c_{ij,r}.$
	
	In Section \ref{Sec:Case_study}, we use the given transformation to convert conditionally bounded CRPS differences of wind gust forecasts into uniformly bounded loss differences. Computational feasibility of this approach is due to the forecasts being either parametric or with finite support, see Supplement \ref{appendix:conditionally_bounded_scores} for details.
	Importantly, for any proper scoring rule (consistent scoring function) $\myS$, the scaled function $\tilde{\myS}=\myS/c$ is proper (consistent) as well for any $c>0$, an observation which justifies the transformation from a  theoretical perspective.
	This property can be extended in the sequential setting, see Supplement \ref{appendix:forecast_evaluation}.
	With the given scaling, all observations have the same maximal impact on the ranking, an effect which seems appealing for many (but surely not for all) applications.

	\begin{rem} For important choices of loss functions such as the quadratic or logarithmic score, the corresponding loss differences are not conditionally bounded.
		However, if the loss differences emerge from (or may be bounded in the tails by) a parametric family of distributions, there are other well-studied e-processes available; see, e.g., \cite{Howard_Chernoff_bounds}.
		In Supplement \ref{Appendix:MSE_Simulation}, we provide a simulation example of mean forecasts with Gaussian errors. The resulting score differences are unbounded but still sub-exponential, which allows to use the e-processes of Proposition \ref{prop:e-process_choe_ramdas}. 
	\end{rem}

	We conclude with a comment on the choice of the parameters $\lambda_{ij}$. For the uniformly weak and weak hypothesis (Propositions \ref{prop:e-process_choe_ramdas} and \ref{prop:choe_ramdas_2}), we assume universal bounds $c_{ij}>0$ on the pairwise loss differences, and $\lambda_{ij}$ may be any fixed value in the interval $(0,c_{ij}^{-1})$.
	As a default, we suggest to choose $\lambda_{ij} = (2c_{ij})^{-1}$. In betting language, this corresponds to bet half of the accumulated evidence at each time step \citep{Shafer2021}. For the strong hypothesis (Proposition \ref{prop:3.2}), we can choose $\lambda_{ij,t}$ predictably to increase power.
	There are many possible betting schemes.
	In Section \ref{Subsec:Covid}, we propose a particular betting scheme which makes use of all previous observations and the assumption that the relative performance of two models does not change too quickly over time.
	Another option would be the \emph{method of mixtures} \citep{Robbins_1970}, where we integrate over all possible values of $\lambda_{ij,t}$ over a particular probability distribution instead of choosing one specific value.
	Mixtures of e-processes are again e-processes and may be expressed in closed form for some distributions \citep[Appendix A.3]{Howard_confidence_sequences}.
	The method of mixtures is one of the most widely-studied techniques for constructing uniform bounds which are shown to be unimprovable under certain conditions (\citealp{Robbins_Sigmnund_1970}, \citealp{Howard_confidence_sequences}).
	However, in accordance with \citet{waudby2023estimating}, for the strong hypothesis, our experience shows that choosing $\lambda_{ij,t}$ predictably is generally more promising than the methods of mixtures.
	Also for the weak hypotheses, where predictable betting strategies do not apply, we could not improve the power of our methods significantly by using mixtures instead of choosing $\lambda_{ij}$ fixed.
	For this reason, and to make computations more tractable, we suggest to work with the given e-processes directly.

	\section{Simulations}\label{Sec:Simulation}
	\subsection{Simulation 1}

	We consider the setup given in Section \ref{Subsec:Discussion_of_superior_models}. That is, we consider $n=1000$ realizations from a sequence $Y_t \sim \mathcal{N}(Y_{t-1},1)$ with $Y_0=0$, and assess $m=49$ forecasters which issue sequentially probabilistic predictions with respect to the CRPS. We write $Y_t=Y_{t-1}+Z_t$, for an i.i.d.\  standard normally distributed sequence $(Z_t)_{t\in \N}$, 
	and use the formulas provided by \cite{Grimit_et_al} to derive 
	\begin{align}\label{eq:loss_sim1}
		L_{i,t}
		= \sqrt{1+\delta_i} \left(\frac{Z_t-\varepsilon_i}{\sqrt{1+\delta_i}}\left(2 \Phi\left(\frac{Z_t-\varepsilon_i}{\sqrt{1+\delta_i}}\right)-1\right) + 2 \varphi \left(\frac{Z_t-\varepsilon_i}{\sqrt{1+\delta_i}}\right)-\frac{1}{\sqrt{\pi}}\right),
	\end{align}
	where $\Phi$ and $\varphi$ denote the cdf and density function of the standard normal distribution, respectively. Let $(\F_t)_{t\in \N}$ be the canonical filtration, that is, $\F_t$ is generated by  $Y_r$ and  $f_{i,r}$ for $i=1, \dots, m$, $r\leq t$. For any $t\in \N$, we have $\E(d_{ij,t}\mid \F_{t-1})=\E(d_{ij,t})=\E(d_{ij,1})$ by independence. Since the $\crps$ is a proper scoring rule, we conclude $\ssupM= \left\{i \in \M_0 \mid \E(d_{ij,1})\leq 0, \forall j\in \M_0\right\}= \{i_0\}$ for the forecaster $i_0$ with $\varepsilon_{i_0}=\delta_{i_0} = 0$. 
	
	We construct SMCSs for $\ssupM$  with confidence level $\alpha=0.1$. Upper bounds on the score differences $(d_{ij,t})_{t\in \N}$ are given in Supplement \ref{appendix:conditionally_bounded_scores}. These do not depend on time, since biases and dispersion errors do not change, so $c_{ij,t}=c_{ij}>0$, $t\in \N$. Define $\lambda_{ij,t}=(2c_{ij})^{-1}$. 
	
	Figure \ref{fig:Simulations1&2_results} shows the average size of the SMCS over time for $N=1000$ simulations. We have a 100\% coverage rate for the best model $i_0$, that is, in all simulation runs, the SMCS never incorrectly excludes $i_0$. Hence, our methods are conservative. However, using e-processes and a sequential application of the closure principle, we cannot expect to improve the power of our methods for SMCSs, since the arithmetic average, which is the e-merging function that we used, essentially dominates any other e-merging function.

	\begin{figure}[ht]
		\centering
		\includegraphics[width =0.7\textwidth]{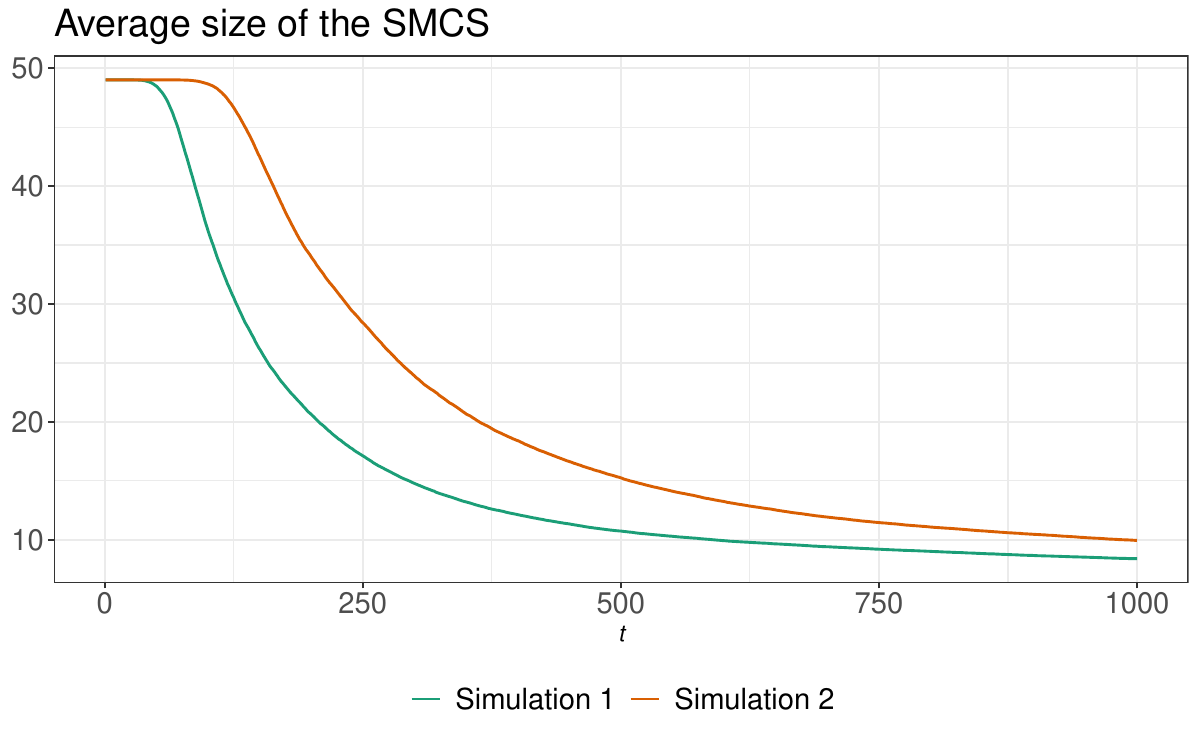}
		\caption{The average number of models in the SMCS in Simulation 1 and 2. At the end of the evaluation period, the SMCSs have an average size of $8.41$ and $9.95$, respectively. For both simulations, the SMCS never wrongly excludes the best model $i_0$.}
		\label{fig:Simulations1&2_results}
	\end{figure}
	
	We ran Simulation 1 also for the original MCS by \cite{MCS}.
	Although its size is generally better than that of the SMCSs, the coverage rate drops below the nominal level \(1-\alpha = 0.9\) to about $0.8$.
	The sequential implementation of the MCS is documented in Supplement \ref{app:F.2}, along with another simulation in a more challenging, non-Gaussian setting.
	There, the coverage rates of the MCS are heavily compromised, in contrast to those of the SMCS.

	\subsection{Simulation 2}
	
	We consider the simulation setting as given in Section \ref{Subsec:Discussion_of_superior_models} for $n=1000$. By \eqref{eq:loss_sim1},
	\begin{align}
		\E(L_{i,t})  &=  \E\left(Z_t \Phi\left(\frac{Z_t-\varepsilon_i}{\sqrt{1+\delta_i}}\right)\right) -\varepsilon_i \E \Phi\left(\frac{Z_t-\varepsilon_i}{\sqrt{1+\delta_i}}\right)+\varepsilon_i+2\E\varphi\left(\frac{Z_t-\varepsilon_i}{\sqrt{1+\delta_i}}\right)-\sqrt{\frac{1+\delta_i}{\pi}} \nonumber \\
		&= \frac{1+ 2 \sqrt{1+\delta_i}}{\sqrt{2\pi(2+\delta_i)}}\exp\left\{-\frac{\alpha_i^2}{2}\right\}-\varepsilon_i \Phi\left(-\alpha_i\right)+\varepsilon_i - \sqrt{\frac{1+\delta_i}{\pi}}.\label{eq:loss_Simulation2}
	\end{align}
	For $i,j \in \M_0$ and $t\in \N$, let $\E(d_{ij,t} \mid \F_{t-1})=\E(d_{ij,t})=\mu_{ij,t}$. Since the CRPS is a proper scoring rule, we have $\mu_{i_0 j,t}\leq 0$ for all $t\notin 7\N$ and $j\in \M_0$. By \eqref{eq:loss_Simulation2}, for $j\in \M_0$ with $(\varepsilon_j,\delta_j)\in \{-0.2,0.2\}^2$, we have $\mu_{i_0 j,t}> 0$ for all $t\in 7\N$. Thus, $\ssupM = \emptyset$. However, equation \eqref{eq:loss_Simulation2} allows to show numerically that $\Delta_{i_0j,t}\leq 0$ for all $j\in \M_0$ and $t\in \N$, and thus $\uwsupM = \{i_0\}$. 
	
	We compute SMCSs for $\uwsupM$ with confidence level $\alpha=0.1$ using the approach in Section \ref{Subsec:SMCS_by_p_processes}. We transform the conditionally bounded CRPS differences (see Supplement  \ref{appendix:conditionally_bounded_scores}) as proposed in Section \ref{Subsec:Predictable_bounds_instead_of_fixed_bounds} into uniformly bounded score differences $\lvert \tilde{d}_{ij,t} \rvert \leq c/2$ for $c=2$, and let $\lambda_{ij}=(2c)^{-1}=1/4$. 
	Figure \ref{fig:Simulations1&2_results} displays the average SMCS size across time, which declines slightly slower than in Simulation 1.
	This is to be expected, since we work with a weaker hypothesis.
	Regarding coverage of the superior object, the SMCS is still conservative and exhibits a coverage rate of \(100\%\).
	
	It could be interesting to monitor the e-processes that we use to collect evidence on predictive performance over time. Due to the large number of models considered here, this is not so practical, but we illustrate this idea in Supplement \ref{Appendix:MSE_Simulation}, where only nine models are considered in the experiment.
	
	In Supplement \ref{appendix:marginal_coverage}, we illustrate the potential gain in power by replacing simultaneous coverage by marginal coverage (see Remark \ref{rem_marginal_coverage}) by comparing the average size of the SMCS in Simulations 1 and 2 with and without multiple testing correction.
	
	\subsection{Simulation 3}
	
	We consider the simulation setting from Section \ref{Subsec:Discussion_of_superior_models}. Let $i=1,2,3$ be forecasters issuing median predictions $m_{i,t}=Y_t+\varepsilon_{i,t}$, $t\in \N$, for i.i.d.\ standard normally distributed $(Y_t)_{t\in \N}$ and $\varepsilon_{1,t}= \beta,  \varepsilon_{2,t}= \gamma^t, \varepsilon_{3,t}= \delta t,$ for $\beta, \delta >0$ and $0<\gamma<1$. For the loss $L(m,y)=0.5(\Phi(m)-\Phi(y))$, we get by independence
	that $2 \mu_{ij,t}=\E(\Phi( m_{i,t})) -\E(\Phi( m_{j,t})) = \Phi\left(\varepsilon_{i,t}/\sqrt{2}\right) - \Phi \left(\varepsilon_{j,t}/\sqrt{2}\right)$ for  $t\in \N$, $i,j=1,2,3$, thus
	\begin{equation}\label{eq:superior_models_sim3}
		\wsupM_t = \left\{i \ \middle| \  \sum_{r=1}^t \Phi\left(\frac{\varepsilon_{i,t}}{\sqrt{2}}\right) \leq \sum_{r=1}^t \Phi\left(\frac{\varepsilon_{j,t}}{\sqrt{2}}\right),\; \text{for all}\; j\right\}.
	\end{equation}
	We let $\beta=0.6, \gamma = 0.998$ and $\delta = 0.008$. Then, one can numerically show that $\wsupM_t = \{3\}$ for $t\leq 153$, $\wsupM_t = \{1\}$ for $153 < t< 550$, and $\wsupM_t = \{2\}$ for $t\geq 550$.
	
	We sample $n=800$ observations and construct the SMCSs defined in Section \ref{Subsec:SMCS_for_weakly_superior_objects} for $(\wsupM_t)_{t\leq n}$, at level $\alpha=0.1$.
	We use the fact that $(d_{ij,t})_{t\in \N}$ is uniformly bounded with $c=1$ and let $\lambda_{ij}=1.1^{-1}$, $i,j=1,2,3$. Figure \ref{fig:Simulation3_results} shows the accumulated observed losses $\sum_{r=1}^t L_{i,r}$ and the resulting SMCS for one realization. The SMCS correctly excludes the improving forecaster at the beginning and the worsening forecaster relatively quickly after the first change point ($t=154$) until it includes the improving forecaster again and finally excludes the constantly biased forecaster.
	The average size of the SMCS for $N=100$ runs varies between one and two models after a short initial period with three models. The SMCS includes the superior objects given at \eqref{eq:superior_models_sim3} at all time points. 
	
	Again, we show in Supplement \ref{appendix:marginal_coverage} how one can reduce the average size of the SMCS under a weaker coverage guarantee. 
	
	\begin{figure}[h]
		\centering
		\includegraphics[width =0.49 \textwidth]{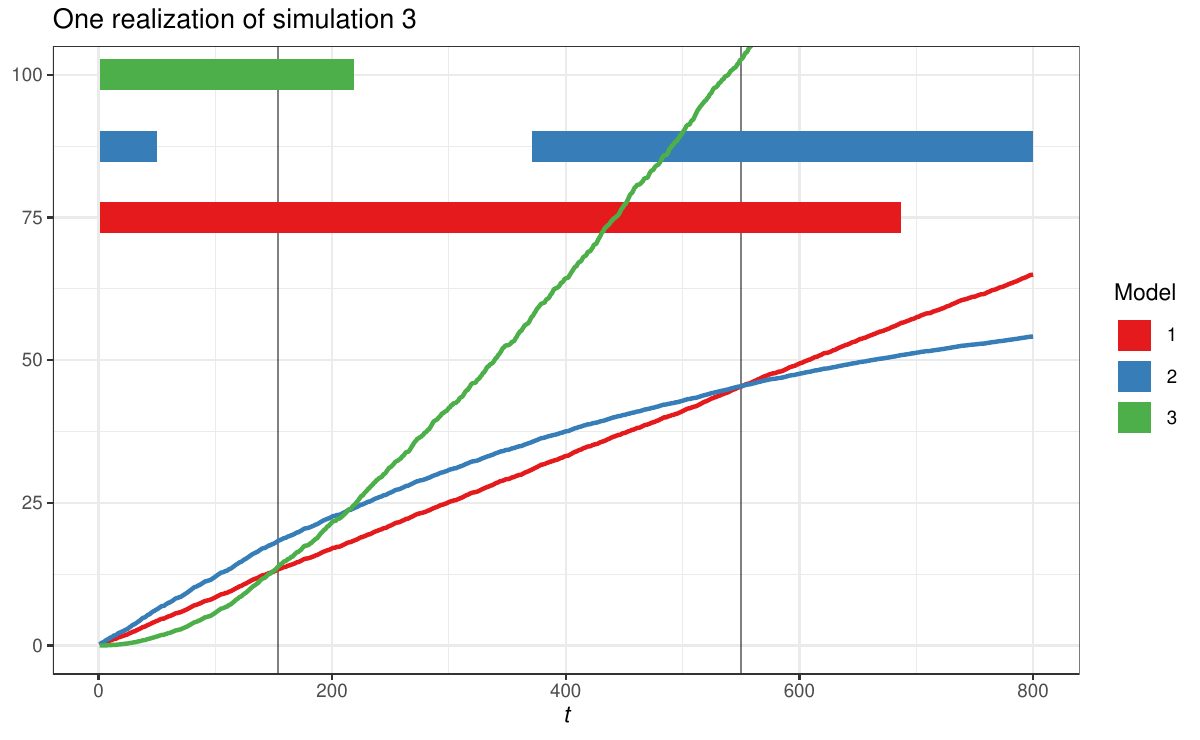}
		\includegraphics[width =0.49 \textwidth]{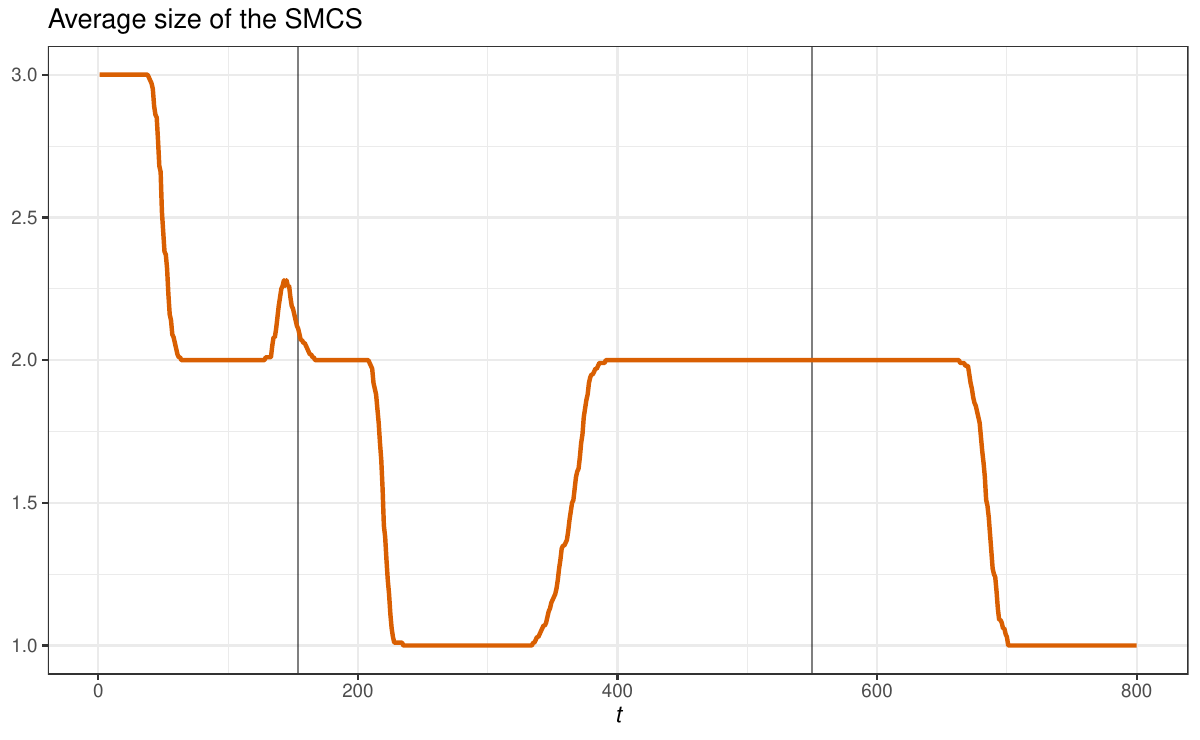}
		\caption{Left: Realized accumulated losses $\sum_{r=1}^t L_{i,r}$ for one realization: Worsening forecaster (green), improving forecaster (blue), constantly biased forecaster (red). The black vertical lines indicate $t=154$ and $t=550$. The resulting SMCS is given in the upper part with the respective colors. Right: Average size of the SMCS over $N=100$ realizations.}
		\label{fig:Simulation3_results}
	\end{figure}

	\section{Case Studies}\label{Sec:Case_study}
	
	\subsection{Covid-19 case study}\label{Subsec:Covid}
	
	After the outbreak of the Covid-19 pandemic, the Reich lab at the University of Massachusetts collaborated with the United States centers for disease control and prevention (CDC) to create the Covid-19 forecast hub, a repository containing point and probabilistic forecasts for incident cases, hospitalizations, and Covid-19 related deaths. The repository was founded in March 2020, and attracted submissions of forecasts from more than \(60\) different teams.
	The collection of forecasts is still ongoing, although most of the models had stopped submitting forecasts by January 2024, due to the subsidence of the pandemic.

	\subsubsection{Data}
	
	The dataset we use in this application, is 
	discussed in detail in \cite{Cramer2021-hub-dataset,Cramer2022_evaluation_of_forecasts}, and 
	publicly available at the Covid-19 forecast hub GitHub repository and on the Zoltar forecast archive. 
	It contains forecasts with corresponding observations for key epidemiology indices for 55 different locations in the U.S.~as well as on the aggregated U.S.~national level. The forecasts are issued by a total of 69 different forecast models and reported at \(23\) different quantile levels with forecast horizons ranging from $1$ day to $20$ weeks. Some models are not statistically comparable since they have consistently issued their forecasts on different weekdays.
	There are also models that started issuing forecasts late into the forecasting period and others that stop early.
	Finally, there are models with a large number of missing values.
	In principle, one could use missing value imputation, or assign a default score to missing values so that the number of comparable models is higher.
	However, the purpose of this study to demonstrate how sequential model confidence sets work, so we so not go further in this direction.
	We only include models that are directly comparable without any  preprocessing.
	
	Specifically, we consider 1-week-ahead forecasts for Covid related deaths on the national level, issued weekly in the period from 06/06/2020 to 04/03/2024, and focus on the comparison of the following $m=6$ different forecasting models:
	Firstly, we consider the benchmark model \emph{baseline}, which naively issues the most recent outcome as the median prediction for the following week and forms a predictive distribution around this median prediction based on the past weekly incidences. Secondly, we consider two epidemic models \emph{PSI-draft} and \emph{MOBS-gleam} issuing their forecasts based on epidemiological assumptions, and the neural network based model \emph{GT-deep}, which was the first purely data-driven model to be included in the Covid-19 forecast hub. Finally, we consider the summary models \emph{ensemble} and \emph{CDC-ensemble}, where the former combines the most recent predictions of all other models into one predictive distribution, and the latter is supposed to improve the former by only considering the 10 current best models, measured by the average weighted interval score over the 12 most recent weeks. For a more detailed documentation of the models, we refer to \cite{Cramer2021-hub-dataset}.

	We perform our comparison under the strong hypothesis, since we expect the ensemble models to have consistent superior performance over the individual models, which in turn are expected to be better than the naive baseline model issuing its forecasts only based on the previous outcomes and on no additional information. Furthermore, we can argue that in such a highly important prediction task, we should seek to detect the models that issue accurate predictions consistently over time and not just on average. 
	
	The outcomes as well as the forecasts exhibit unstable behaviour and  large variation in scale, see e.g.\ the upper left panel of Figure \ref{fig:covid_data}, which suggests to apply a log transformation on all quantities of interest. Moreover, forecast differences are typically not distinctive for the median, while they become more apparent when we look at the tails, see the lower panels in Figure \ref{fig:covid_data}.
	\begin{figure}[t]
		\centering
		\includegraphics[width=0.49\textwidth]{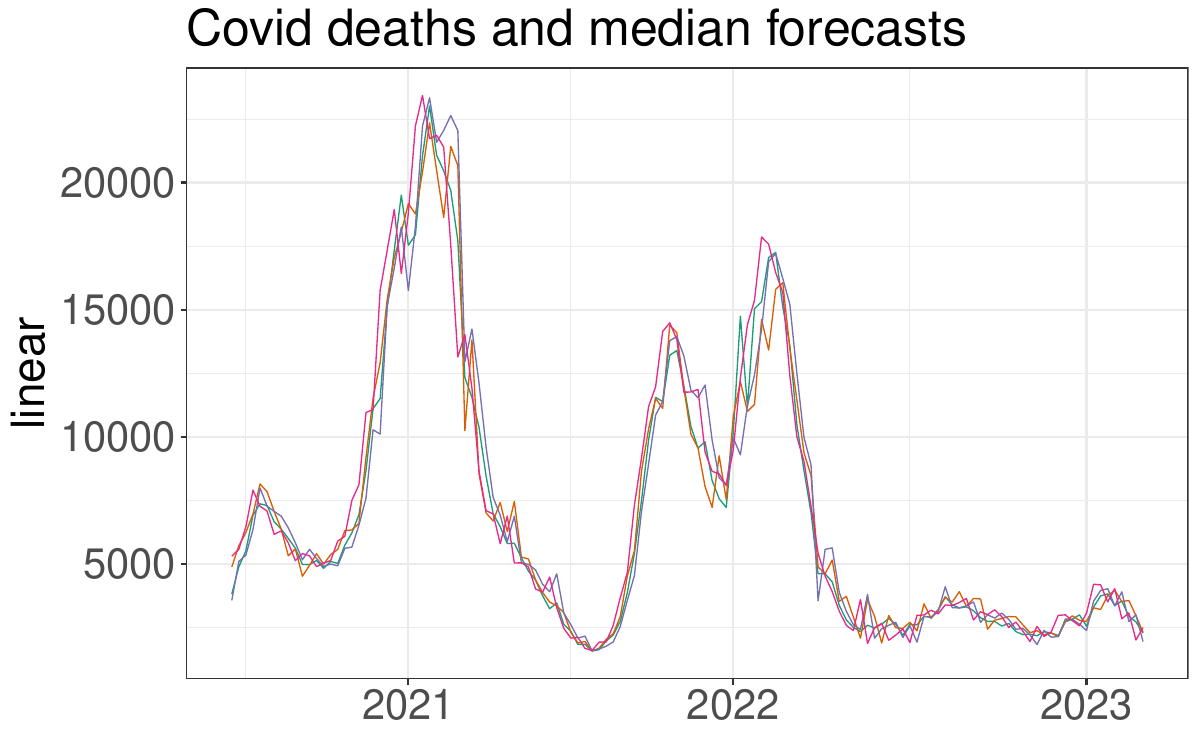}
		\includegraphics[width=0.49\textwidth]{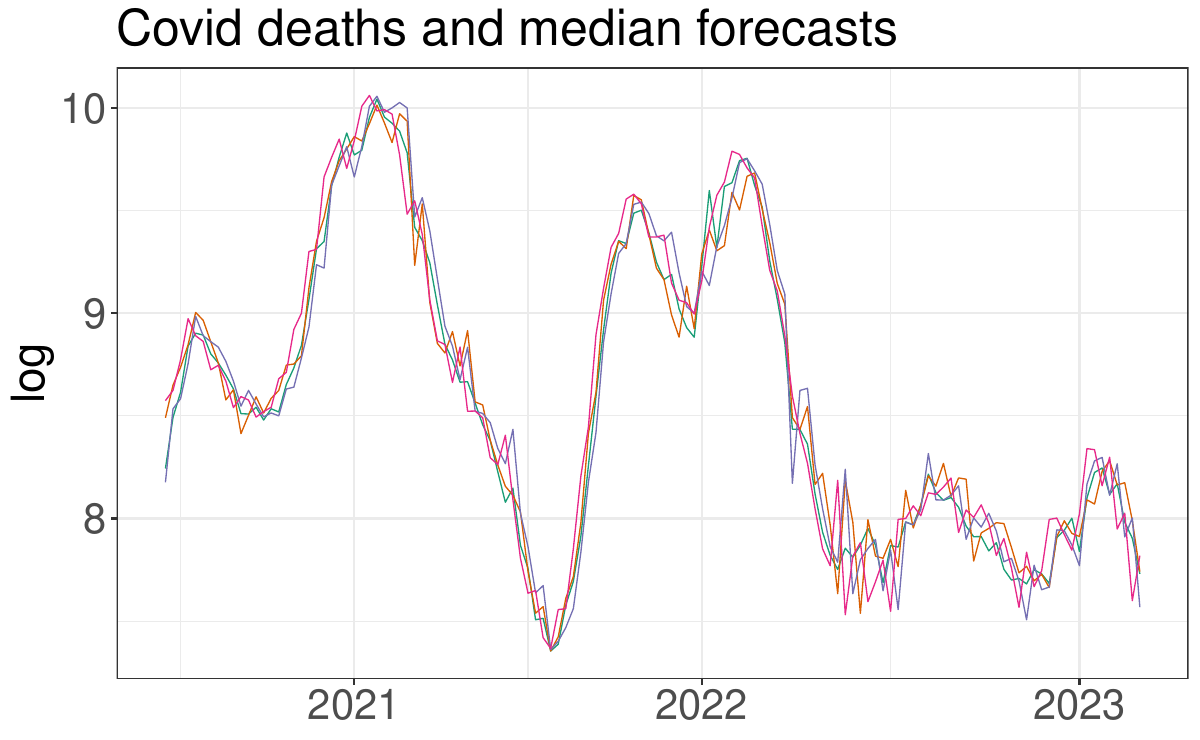}
		\includegraphics[width=0.49\textwidth]{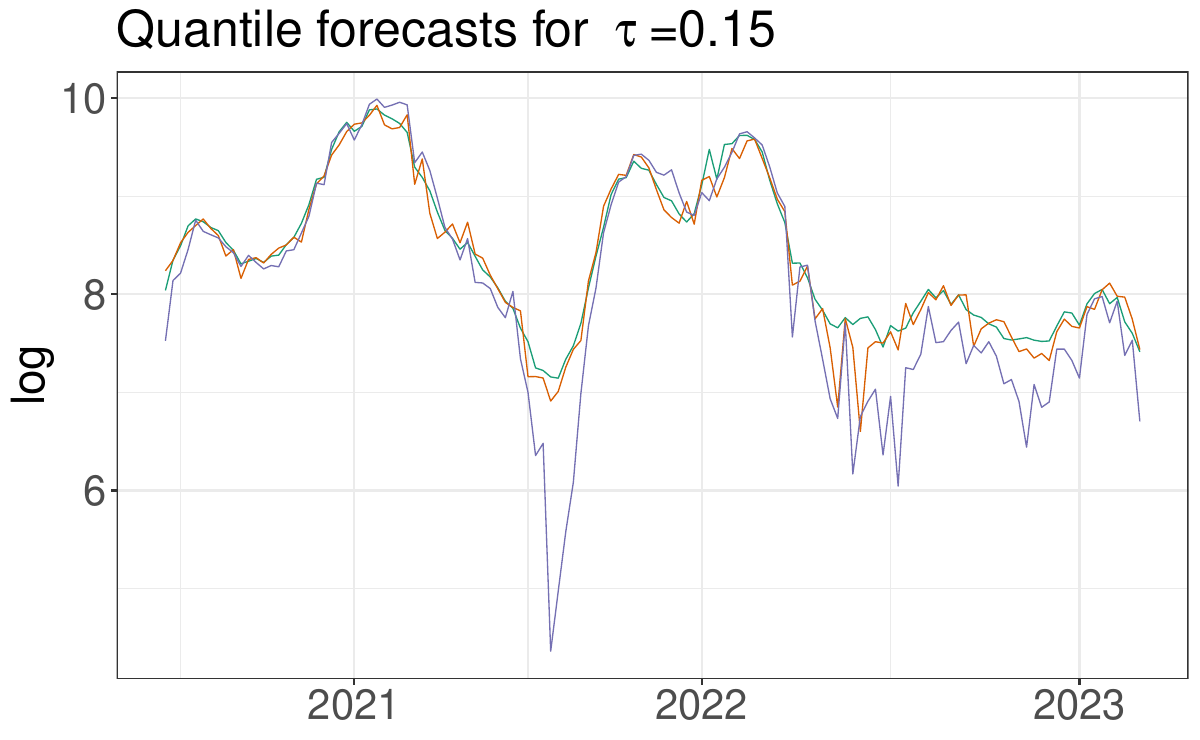}
		\includegraphics[width=0.49\textwidth]{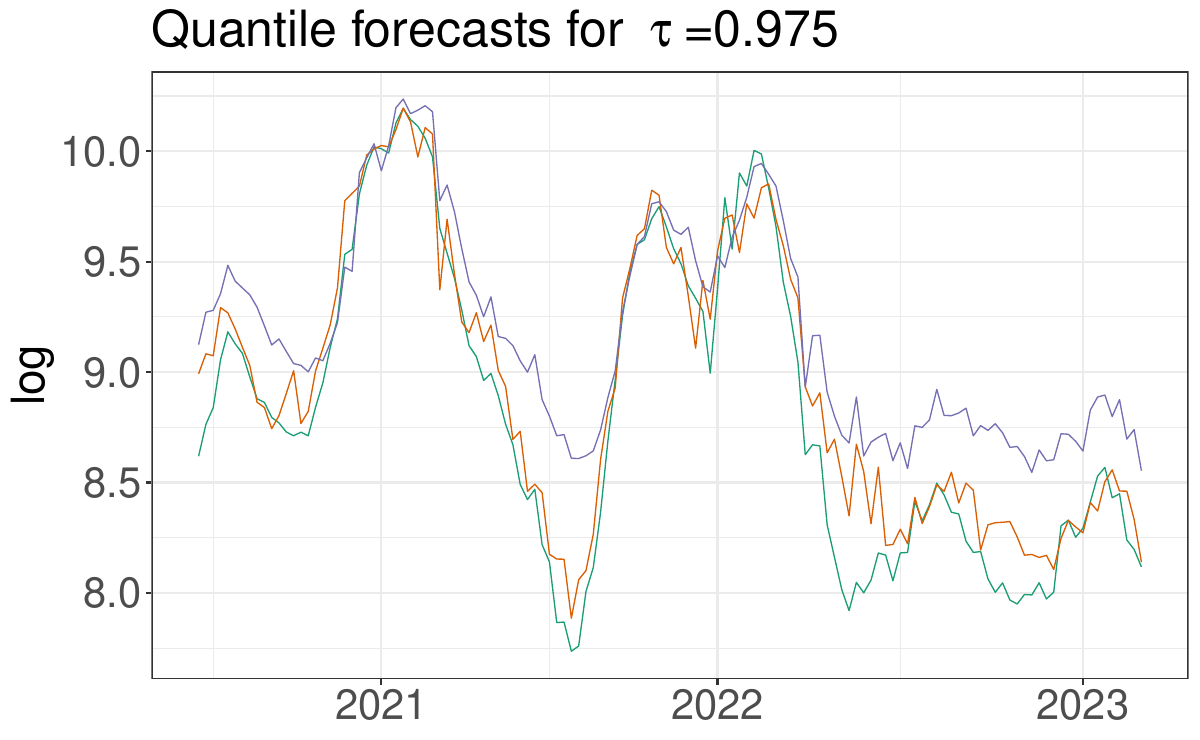}
		\includegraphics[width=0.6\textwidth]{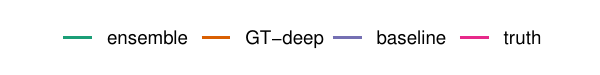}
		\caption{Time progression of predicted and actual Covid related mortality in linear (upper left) and log scale (upper right). For the forecasts of the tails, the differences between the different models are more pronounced than for the median, see the lower panels for \(\tau=0.15,0.975\).}
		\label{fig:covid_data}
	\end{figure}

	\subsubsection{Methods and implementation details}
	
	We assess the different quantile forecasts at level $\tau \in (0,1)$ with respect to the generalized piecewise-linear quantile scoring function $S_\tau(x,y)= \left(\mathbf{1}\{x\geq y\}-\tau\right)\left(\log(x)-\log(y)\right)$ for $x,y\in \R$, see \citet{Gneiting2011a} and Supplement \ref{appendix:forecast_evaluation}.
	In Supplement \ref{appendix:conditionally_bounded_scores}, it is shown that the resulting quantile score differences are conditionally bounded. More precisely, for $\tau$-quantile predictions $(x_{i,t})_{t \in \N}, (x_{j,t})_{t \in \N}\subseteq \R$ issued by the models $i,j=1, \dots, m$, we have $\lvert d_{ij,t} \rvert =\lvert S_\tau(x_{i,t},y_t)-S_\tau(x_{j,t},y_t) \rvert \leq c_{ij,t}/2$ for
	\begin{equation}\label{eq:predictable_bounds_covid_case_study}
		c_{ij,t} = 2\max\left\{\tau, 1-\tau \right\}\lvert \log(x_{1,t})-\log(x_{2,t})\rvert, \quad t\in \N.
	\end{equation}
	We apply our methods from Section \ref{Subsec:SMCS_by_p_processes} to construct SMCSs for the strongly superior objects, using the e-processes
	$E_{ij,t} = \prod_{r=1}^t (1+\lambda_{ij,r} d_{ij,r})$ for a predictable sequence $(\lambda_{ij,t})_{t\in \N}$ with $0<\lambda_{ij,t} < c_{ij,t}^{-1}$, $t\in \N$. For $i,j=1, \dots, m$ and $t\in \N$, we suggest $\lambda_{ij,t}=({K_{ij,t}c_{ij,t}+\varepsilon})^{-1}$ for some small \(\varepsilon>0\), included as a safeguard for the scenario \(c_{ij,t}=0\), which would be the case if the two forecasters issued the same prediction, and
	\begin{equation*}
		K_{ij,t} = \frac{2-\left|\tau-\frac{1}{2}\right|}{1+\left|\tau-\frac{1}{2}\right|} \cdot \frac{3\pi/2+\arctan(-d_{ij,t-1})}{\pi} \geq 1.
	\end{equation*}
	The coefficient \(K_{ij,t}\) depends on the sign and magnitude of the most recent score difference \(d_{ij,t-1}\) as well as of the centrality of $\tau$: If \(d_{ij,t-1}>0\), then we have evidence against the null hypothesis that model $i$ is strictly superior than model $j$, and the quantity \(K_{ij,t}\) becomes small. 
	That is, the parameter \(\lambda_{ij,t}\) tends to be larger, and we bet more against the null hypothesis. 
	Finally, the primary multiplicative factor, which takes values in $[1,2]$, is used to reduce the influence of the term \(\max\left\{\tau, 1-\tau\right\}\) on the bound in \eqref{eq:predictable_bounds_covid_case_study}. 
	
	\subsubsection{Results}
	
	\begin{figure}[t]
		\centering
		\includegraphics[width=0.49\textwidth]{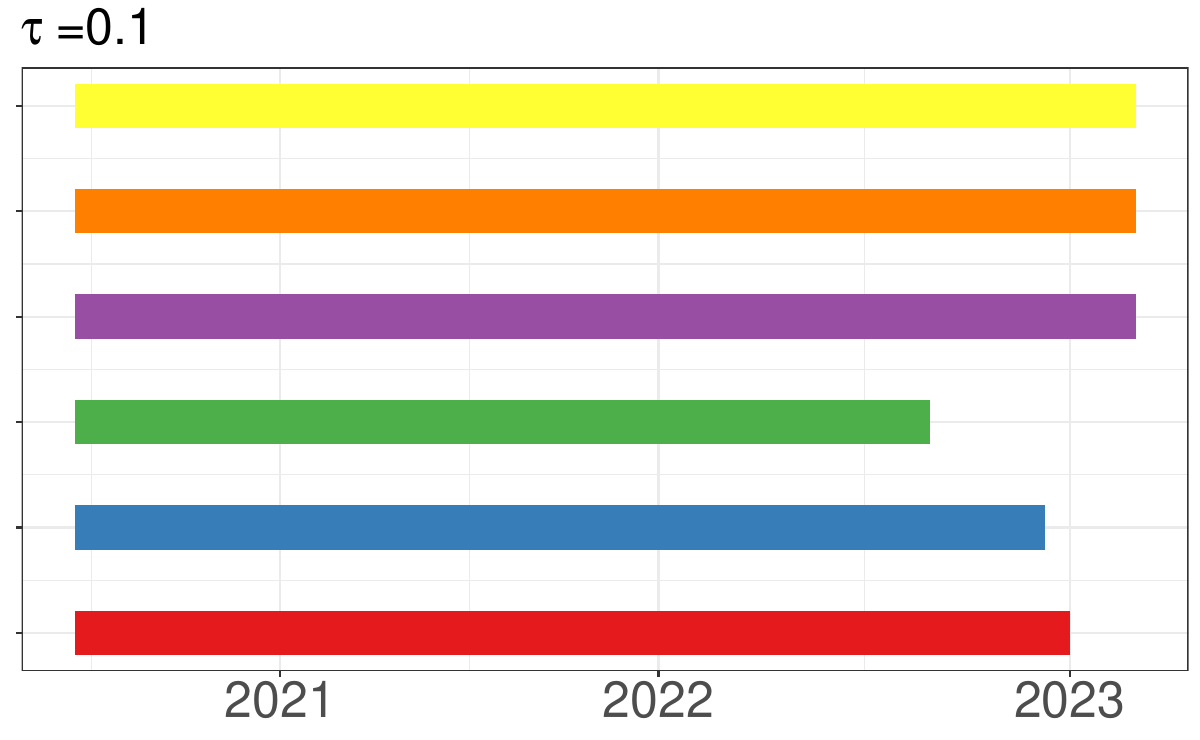}
		\includegraphics[width=0.49\textwidth]{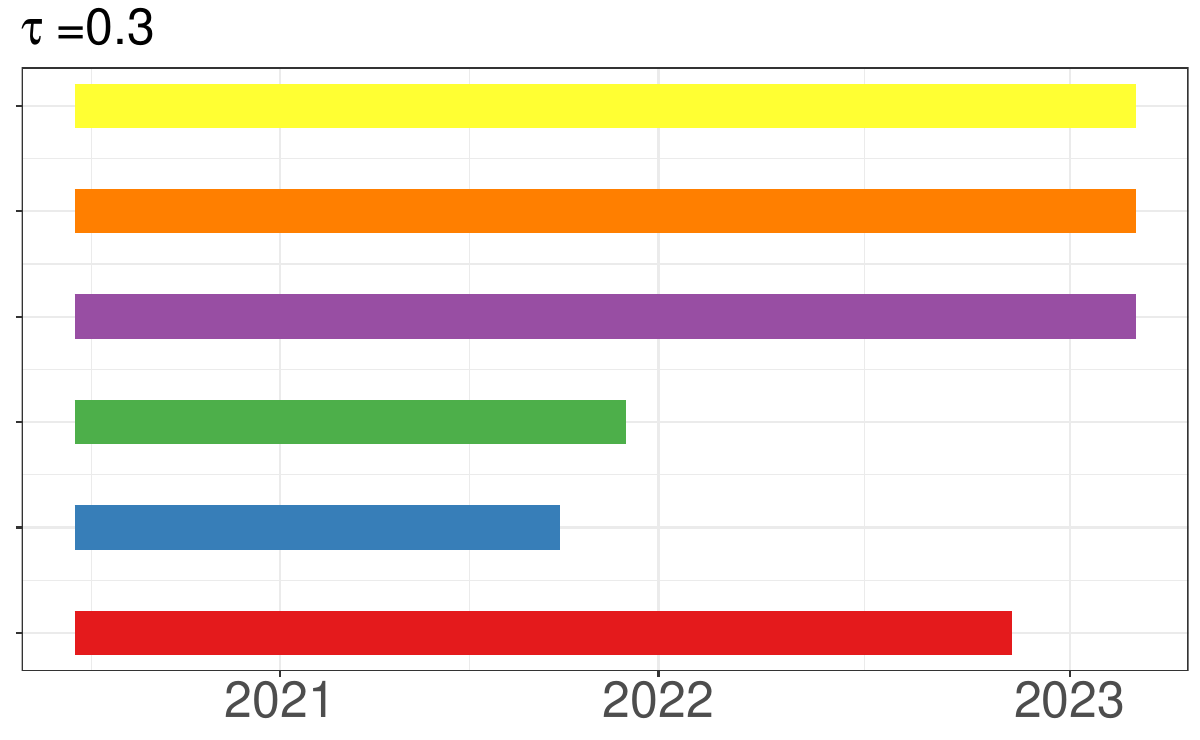}
		\includegraphics[width=0.49\textwidth]{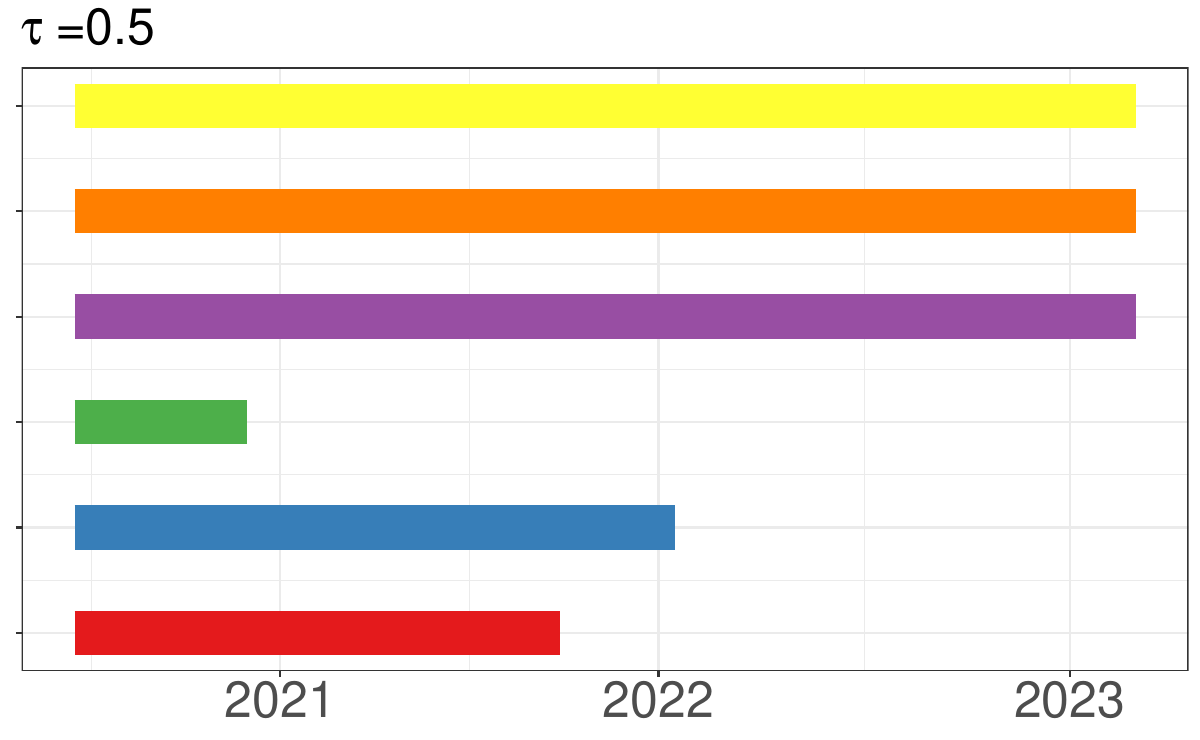}
		\includegraphics[width=0.49\textwidth]{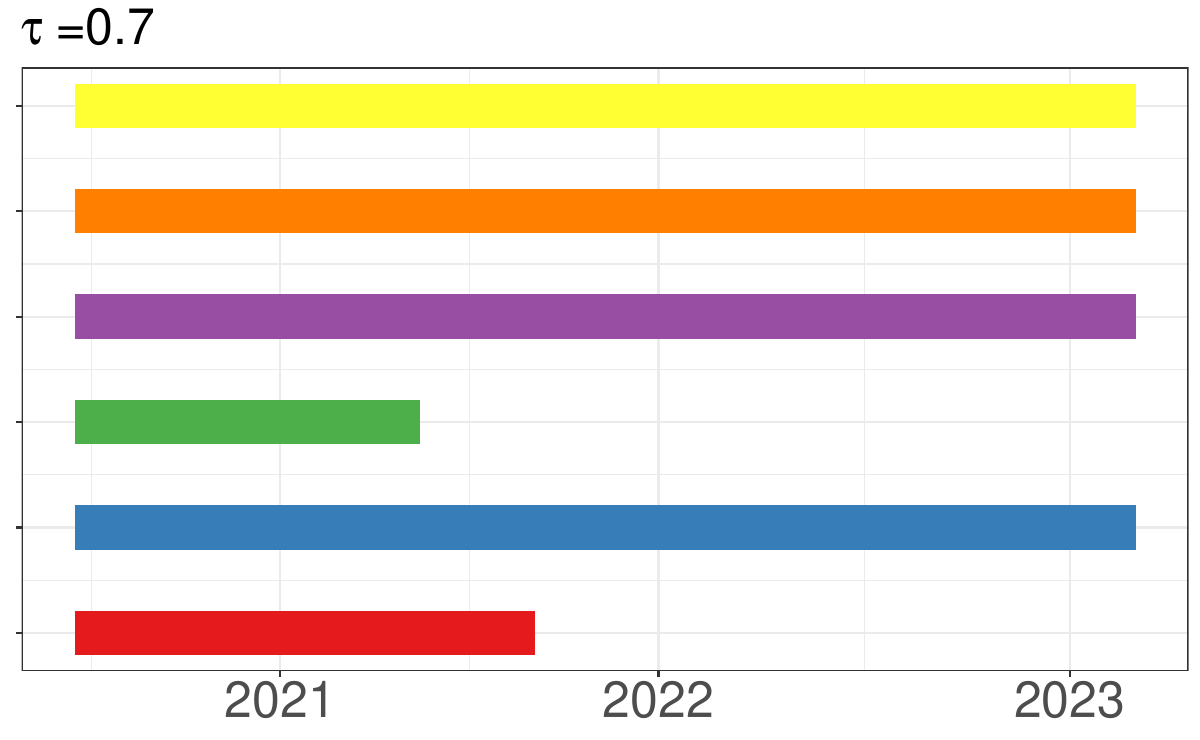}
		\includegraphics[width=0.8\textwidth]{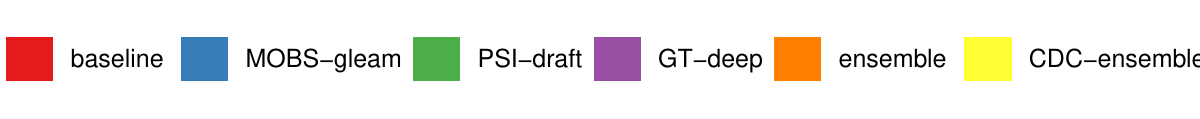}
		\caption{SMCSs at four selected quantile levels with confidence level $\alpha=0.1$.}
		\label{fig:results_covid_smart_approach}
	\end{figure}
	
	Figure \ref{fig:results_covid_smart_approach} shows the resulting SMCSs for four selected quantile levels with a confidence level of $\alpha=0.1$. As a first observation, we see that the power of the SMCS varies with the choice of the quantile level $\tau$: Whereas the SMCS detects three deficient models for $\tau=0.1,0.3,0.5$, it only has excluded two models by the end of the observation period for $\tau=0.7$. For $\tau=0.1$, the SMCS needs considerably more observations to reject the three models, which are excluded much faster at the median. The lower power in the tails of the distribution may be due to the factor \(\max\left\{\tau, 1-\tau\right\}\) in the bound \eqref{eq:predictable_bounds_covid_case_study}. 
	Overall, the results show that the two ensemble models perform best, as they are included in the SMCS across the entire time period, and at all quantile levels.
	The superior performance of the two ensemble models was expected, since they combine the forecasts of all other models and thus have access to a larger information set. Interestingly, the SMCS does not detect any significant difference between ensemble and CDC-ensemble, where the latter was supposed to improve the former by combining only the best-performing forecasting models.
	The naive baseline performs surprisingly well and even outperforms the epidemiological model, PSI-draft, which is excluded earlier than the baseline for all given values of $\tau$.
	We may conclude that PSI-draft is not competitive against the other models and should therefore not be considered as an accurate forecasting model for the pandemic. Finally, the second epidemiological model, MOBS-gleam, performs almost always better than PSI-draft but not as good as the neural network based GT-deep.
	
	To conclude, we highlight once again the key feature that SMCSs allow us to collect statistical evidence sequentially over time.
	Hence, if we were back in the pandemic, 
	by using SMCSs, we could monitor the models performances from the beginning on and would not have to wait until we have collected enough observations to perform a statistical test for predictive ability. Quickly, we would reject PSI-draft as an accurate model and not consider it in our analysis and policy-making anymore. Also MOBS-gleam would probably have been rejected as a competitive model by approximately the end of 2021.
	Clearly, in such emergency situations, it has already been common practice to assess the forecasting models on a regular basis. However, this practice may lack theoretical justification, and our methodology provides statistically safe methods to do so. 
	
	We focused primarily on the sequential monitoring of model performances but our methodology also allows for optional stopping. For example, a possible stopping criterion back in the pandemic could have been the first day when there are only half of the original models left.
	
	Throughout the paper we have used a significance level equal to $\alpha = 0.1$, but clearly there is freedom to do otherwise. The choice of $\alpha$ affects the interpretation of the SMCS. For example, when $\alpha=0.1$, we can deduce that the SMCS contains the optimal Covid-19 forecasting model with  90\% confidence. 
	
	The choice of $\alpha$ also affects the size of the SMCS. Larger values of $\alpha$ would lead to earlier detections and to a smaller SMCS.
	However, they would also translate to a lower level of confidence.
	In practice, we should keep this tradeoff in mind when choosing \(\alpha\).
	Our goal to narrow down the original set of models to a small SMCS competes with the high level of confidence that is necessary in various applications.
	
	\subsection{Postprocessing of wind gust predictions}\label{Subsec:Windgust}
	
	Weather services use numerical weather prediction (NWP) models for predicting the future state of the atmosphere. These NWP models quantify forecast uncertainty using ensemble predictions, where each ensemble member represents a different future scenario. However, such ensemble forecasts are typically subject to systematic biases and dispersion errors. Hence, the forecasts ought to be statistically postprocessed to generate accurate and reliable predictions \citep{Gneiting2005a, Vannitsem2018}. 
	
	\citet{schulz2022machine} present a comprehensive comparison of multiple approaches for statistical postprocessing of ensemble forecasts for wind gusts. 
	Here, we will build on their case study and compare the different forecasting models for probabilistic wind gust prediction by using SMCSs under the weak hypothesis.

	\subsubsection{Data}
	
	We use the data set in \citet{primo2024comparison}, which is an extension of the original data in \citet{schulz2022machine}. 
	The forecast data is based on an operational NWP model from the German weather service (Deutscher Wetterdienst, DWD) in the period from 08/12/2010 to 30/06/2023 with corresponding observations at 174 geographically diverse weather stations in Germany. The forecasts are initialized at 00 UTC and have a forecast horizon of up to 21 hours. Here, we consider only those forecasts with a lead time of 18h referring to 18 UTC and the $K=166$ stations with at least 2,600 observations.
	
	Next to the ensemble forecasts (EPS), we include the eight postprocessing methods applied by \citet{schulz2022machine}. 
	Statistical postprocessing methods are typically distributional regression models that use ensemble predictions from the NWP model as input data.
	The methods in we compare can be divided into three groups: Basic techniques that only make use of the underlying wind gust ensemble predictions (EMOS, MBM, IDR), established machine learning methods for postprocessing that incorporate additional covariates (EMOS-GB, QRF), and neural network-based approaches (DRN, BQN, HEN). While the first two groups fit a separate model for each station, the neural networks estimate one locally adaptive model for all stations. Further, the models differ in the resulting forecast distributions, which range from parametric (EMOS, EMOS-GB, DRN) to semi- and non-parametric types (all others). The postprocessing models are trained on the period from 2010 to 2015 using the same configuration as in \citet{schulz2022machine} and evaluated on the remaining period from 2016 to 2023.
	
	To improve the forecast quality, NWP models are continuously developed further and frequently updated. Within the period from 2010 to 2023, the underlying NWP model has undergone several updates, of which three change the systematic errors of the ensemble forecasts drastically (on 22/03/2017, 16/05/2018 and 10/02/2021). These model updates present a challenge for statistical postprocessing methods, as the corrections learned on previous NWP model versions may not lead to the same improvements when applied to the current model version. In essence, the training data becomes less representative of the test data after each update. 
	The methods considered here have been trained on data until the end of 2015, which is before the first major model update in 2017. Still, they are applied to ensemble forecasts that have been generated by another NWP model version, which has undergone up to three major updates. Thus, we expect the behavior both of the ensemble predictions and the postprocessed forecasts to change systematically over time. For further details on the data and postprocessing methods, we refer to \citet{schulz2022machine} and \citet{primo2024comparison}.

	\subsubsection{Methods and implementation details}
	
	We assess the $m=9$ different forecasting models with the CRPS and construct SMCSs for the weak hypothesis as proposed in Section \ref{Subsec:SMCS_for_weakly_superior_objects}. Since the forecasts are predictable, the CRPS differences are conditionally bounded, see Supplement \ref{appendix:conditionally_bounded_scores}. We use the transformation we discussed in Section \ref{Subsec:Predictable_bounds_instead_of_fixed_bounds} to obtain uniformly bounded score differences. That is, $\lvert d_{ij,t}^k\rvert \leq c/2$ for the resulting CRPS differences at station $k\leq K$, for $c=2$, $i,j\leq m$, $t\in \N$. We set $\lambda_{ij,t}=(2c)^{-1}=1/4$ and $\alpha=0.1$. For inference on all stations simultaneously, we consider the average scaled CRPS differences ${d}_{ij,t}=\sum_{k=1}^{K}{d}^k_{ij,t}/K$, for $i,j\leq 9$ and $t\in \N$.

	\subsubsection{Results}

	First, for the SMCS with respect to the loss differences averaged over all stations, our conclusions regarding the performance of postprocessing methods align with the results in \citet{schulz2022machine}. The left panel of Figure \ref{figure:Wind_gust_merged_results} displays the SMCS and shows that the two neural network-based approaches, DRN and BQN, which performed best in the former study, are included in the SMCS for the entire time period. The other models fall out of the SMCS over time, where the order coincides with that found in the former study, e.g., the  EPS is omitted first.
	In this case, where we average the performances over all stations, the SMCS behaves as for the stronger hypotheses, that is, once we eliminate a model, we do not include it anymore.
	
	Second, we consider the SMCS station-wise. The right panel of Figure \ref{figure:Wind_gust_merged_results} gives a summary, as it shows the number of stations for which a forecasting model is included in the corresponding SMCS. The EPS is excluded from most SMCSs already within the first year, the basic techniques EMOS, MBM and IDR are also excluded relatively fast, and that the machine learning approaches EMOS-GB and QRF fall out of the SMCSs successively over time, until early 2021 when the last NWP model update occurred. Interestingly, the number of stations that include the HEN network approach increases after the last NWP model update, a behavior only detectable using the weak hypothesis. 
	Finally, the BQN and DRN approach are included at almost all stations over the entire period, only with DRN being excluded over the last year at some stations. 
	
	\begin{figure}[t]
		\centering
		\includegraphics[width= 0.49\textwidth]{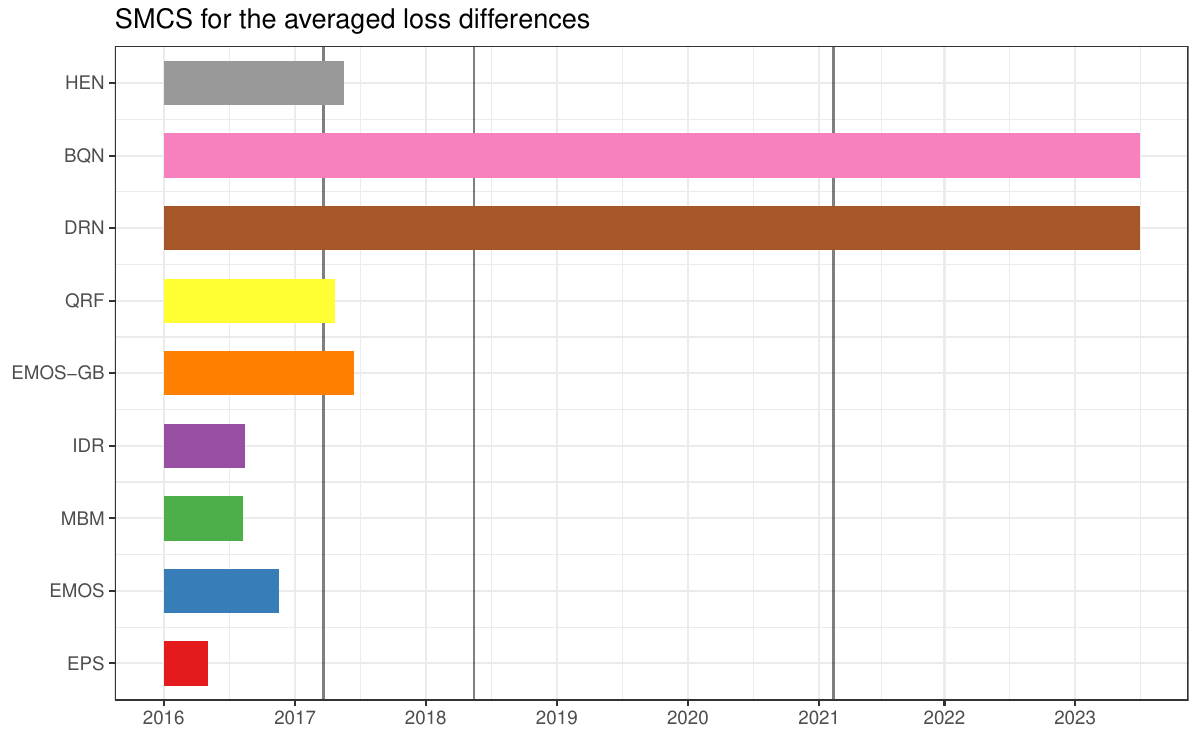}
		\includegraphics[width= 0.49\textwidth]{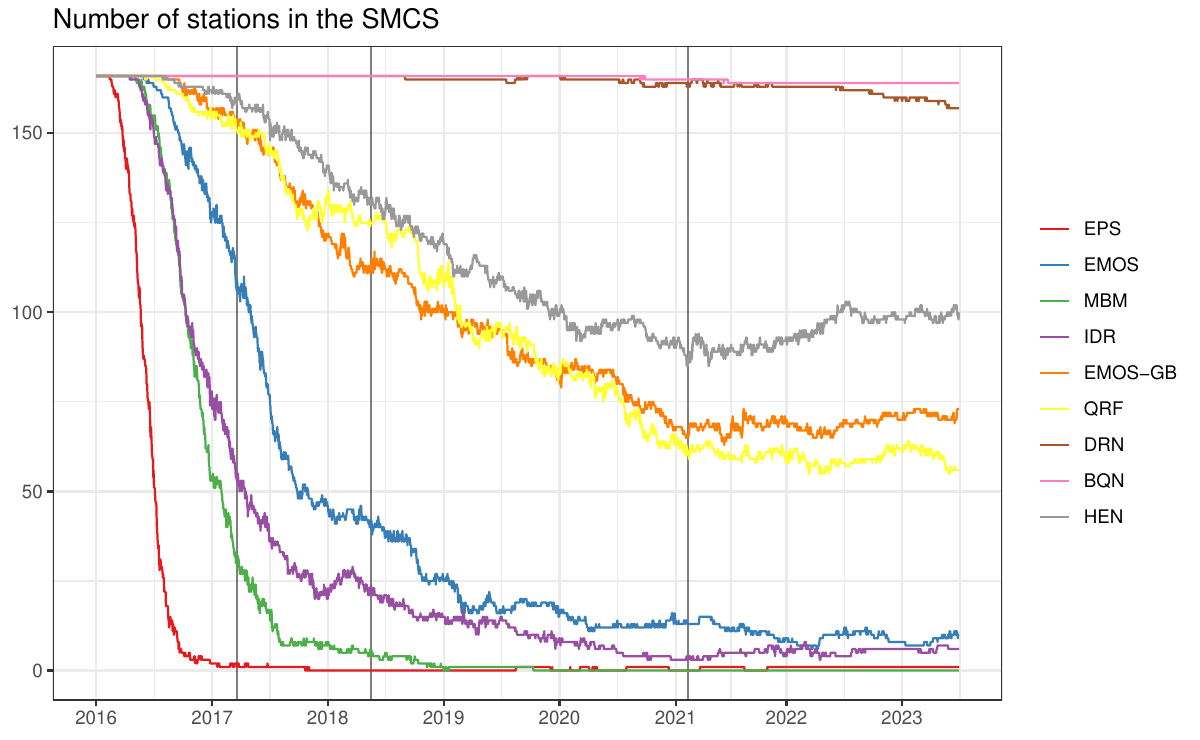}
		\caption{SMCS of averaged loss differences over all stations (left) and the number of stations where a method is included in the SMCS dependent on time (right). The vertical black lines indicate the three major NWP model updates.}
		\label{figure:Wind_gust_merged_results}
	\end{figure}
	
	Taking a closer look at the individual stations, we observe different kinds of local behavior that is not visible when averaging over the loss differences. We can broadly divide the stations into three categories: Regular (56 stations), dynamic (55) and fluctuating(55). Figure \ref{fig:Wind_gust_station_categories} shows one exemplary station for each of the three categories as well as the mean size of the station-wise SMCSs over time.
	The SMCS of a \emph{regular} station behaves (almost) equally as for the stronger hypotheses, which is also seen in the plot of the mean SMCS size over the time period. 
	The \emph{dynamic} stations show a different pattern. As for the regular stations, some of the inferior models are excluded after a certain time period. However, in contrast to the regular stations, some methods that have been eliminated before are now included again, coming closer to the end of the time period. This behavior becomes apparent when looking at the mean size of the SMCSs, which increases after the last update of the underlying NWP model, which systematically changes the predictive performance of the methods. These cases are interesting, since any test that would be performed just at the end of the evaluation period (or at some given time point in the middle) may not detect the deficiencies of these models. Consider, for example, the SMCS at station 10044, Leuchtturm Kiel, displayed at the top right of Figure \ref{fig:Wind_gust_station_categories}, where EMOS, IDR and HEN are excluded from the SMCS for long periods and become competitive again later on. Although these three models do not perform as strongly as DRN and BQN, the SMCS indicates that they are still better than the remaining models for this station. 
	Finally, we have the \emph{fluctuating} stations, where at least one model is repeatedly excluded and included from the SMCS. As for the other two groups, this behavior is reflected in the mean size of the SMCS, which is oscillating more for the fluctuating stations than the other two categories. In all cases, the sequential nature of the SMCS yields a deeper insight into the performance of the methods than tests for predictive ability which are issued at single time points only.

	\begin{figure}[t]
		\centering
		\includegraphics[width = 0.49\textwidth]{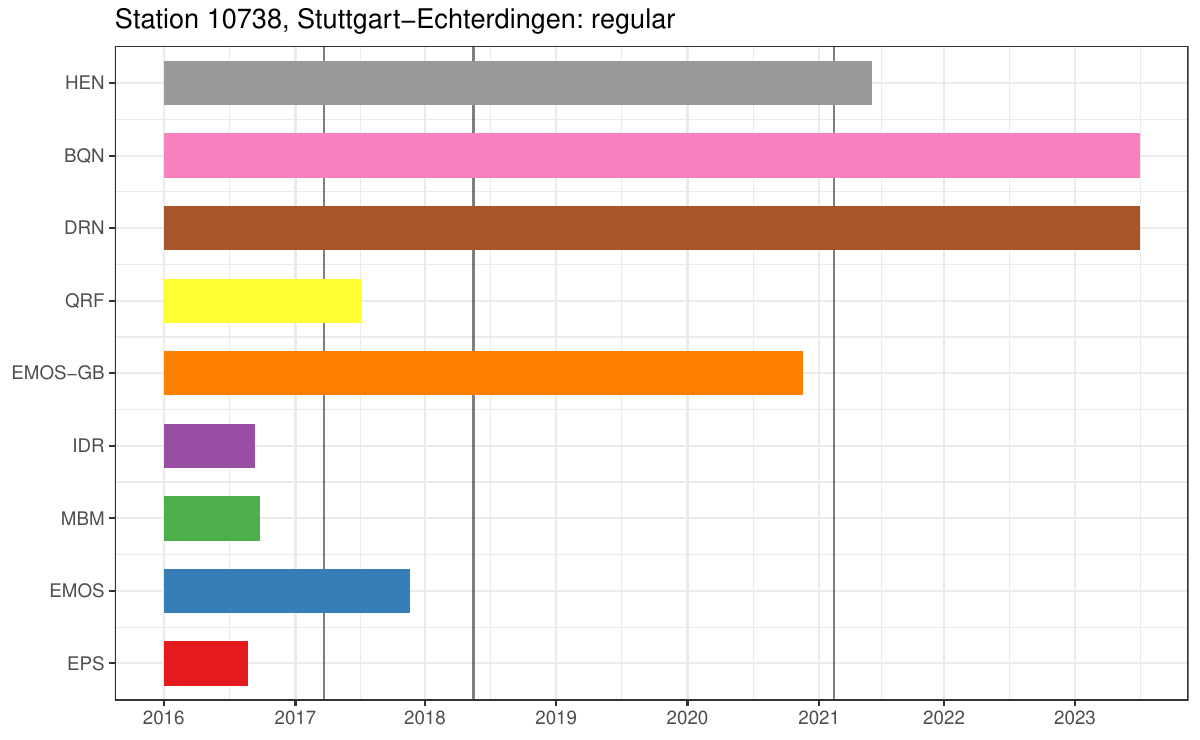}
		\includegraphics[width = 0.49\textwidth]{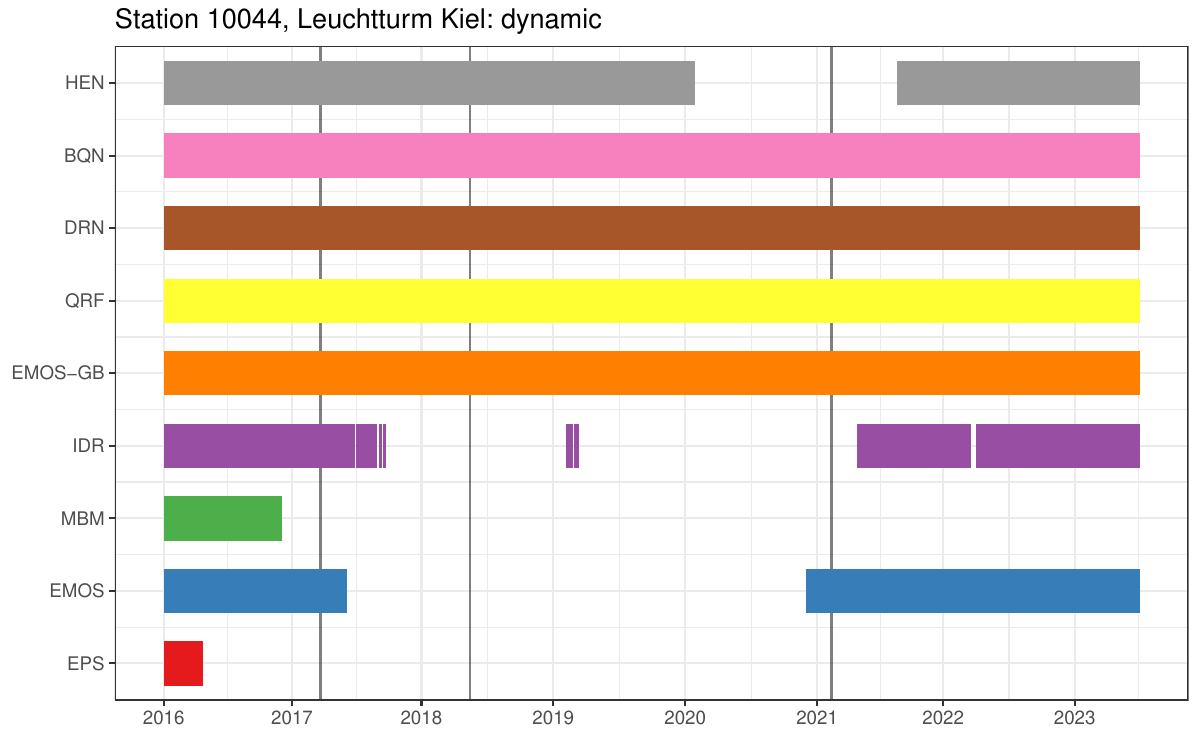}
		\includegraphics[width = 0.49\textwidth]{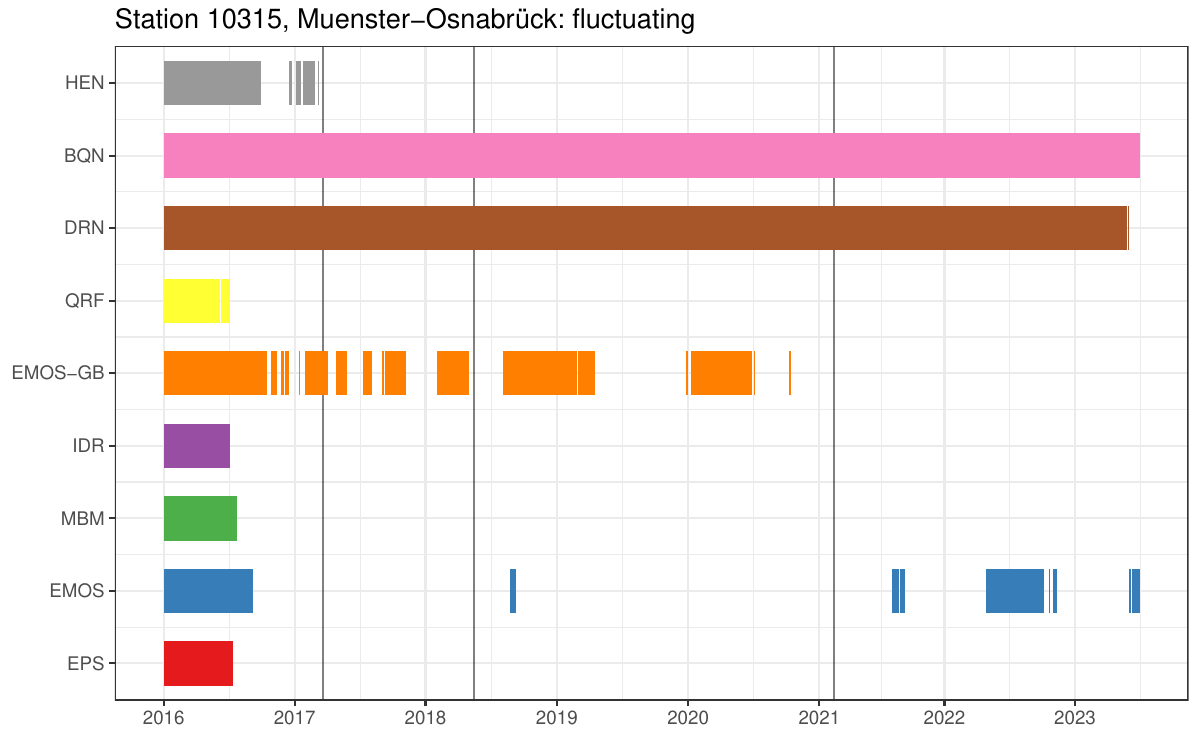}
		\includegraphics[width = 0.49\textwidth]{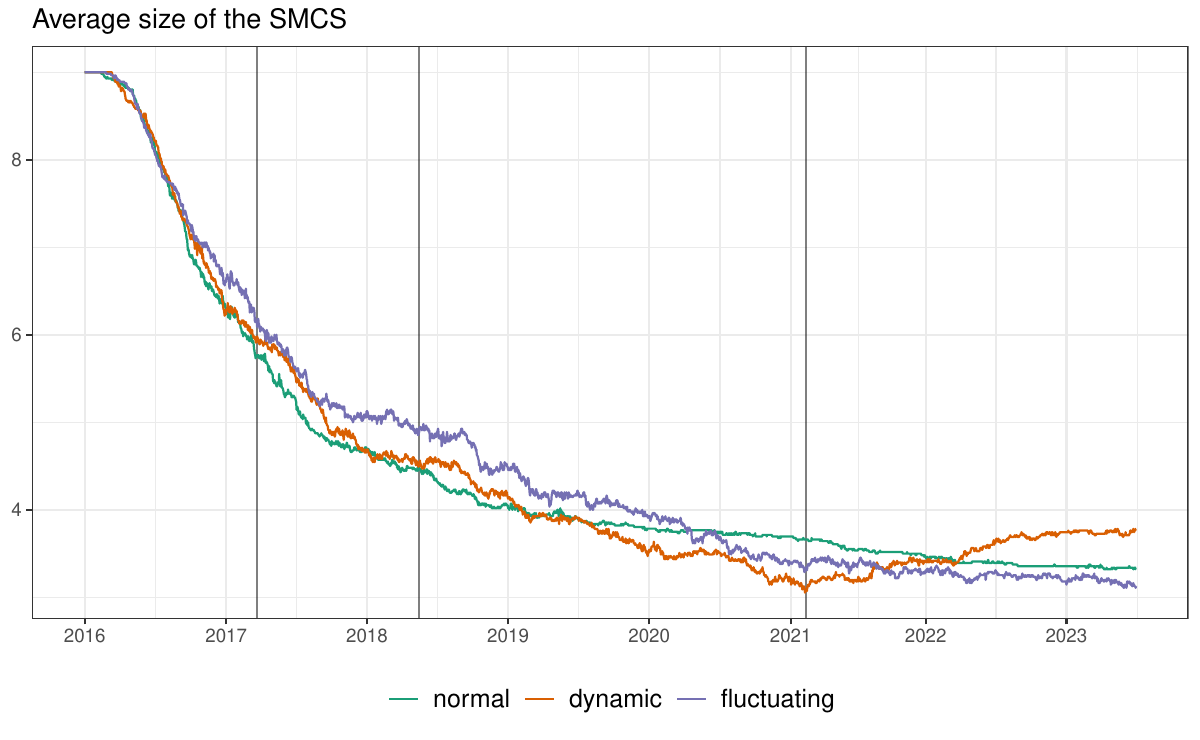}
		\caption{SMCSs for one exemplary station in each of the three categories (top left: normal; top right: dynamic; bottom left: fluctuating) and the mean size of the SMCS averaged over the stations of each category dependent on the time (bottom right). 
		} \label{fig:Wind_gust_station_categories}
	\end{figure}

	\section{Discussion}\label{Sec:Discussion}
	
	We introduce sequential model confidence sets and provide methodology to construct them with respect to three important notions of forecast superiority. SMCSs allow to continuously monitor the performance of some given statistical models. They incorporate the uncertainty in choosing models by being flexible in size, rely on minimal distributional assumptions, and achieve anytime-valid coverage guarantees. 
	
	We have provided coverage guarantees of all superior models with high probability uniformly over time. Other coverage guarantees could be of interest but require somewhat different approaches. The literature on inference on the argmin of a vector of means typically focuses on the guarantee that each superior model is covered with high probability. Future work could consider time-uniform extensions of such guarantees. Furthermore, one could aim for a time-uniform bound of the expected fraction of wrongly rejected models by the total number of rejected models. This can be achieved by time-uniform control of the false discovery rate and we give first results in this direction in Supplement \ref{appendix:FDR_control}. 
	
	We have mainly focused on (conditionally) bounded score differences. However, important loss functions such as the squared error or the logarithmic score do not satisfy this requirement. Nevertheless, our general theory is not limited to bounded score differences.
	We have provided a simulation study for SMCSs with the squared error loss in Supplement \ref{Appendix:MSE_Simulation}. Nonetheless, further research is warranted to extend the understanding and implementation of SMCSs to encompass more general loss functions. Finally, it would be interesting to examine how our results extend to important information criteria, and to apply sequential model confidence sets for sequential model selection.

	\section*{Data and replication material}
	Code in \textsf{R} for the case studies and replication material for the simulations are available at \url{https://github.com/GGavrilos/SMCS}.
	
	\section*{Acknowledgments}
	The comments of three anonymous referees significantly improved the quality of the paper. We thank Lutz D\"umbgen, Peter Gr\"unwald, and Tim Stephan for valuable discussions and inputs. Sebastian Arnold and Johanna Ziegel gratefully acknowledge financial support from the Swiss National Science Foundation. Computations for the simulations and parts of the wind gust case study have been performed on UBELIX (\url{https://ubelix.unibe.ch/}), the HPC cluster of the University of Bern.
	Benedikt Schulz gratefully acknowledges funding within the project C5 ``Dynamical feature-based ensemble postprocessing of wind gusts within European winter storms'' of the Transregional Collaborative Research Center SFB/TRR 165 ``Waves to Weather'' funded by the German Research Foundation (DFG).
	We thank Cristina Primo, Reinhold Hess and Sebastian Trepte for providing the forecast and observation data, and Robert Redl for assistance in data handling.

	\bibliographystyle{abbrvnat}
	\bibliography{biblio}
	
	\newpage
	
	\appendix
	\section*{Supplementary material}
	\section{Scoring rules and scoring functions}\label{appendix:forecast_evaluation}
	In this section, we review some important concepts for forecast evaluation. We assume that the unknown quantity of interest maps into some measurable space $\X$ and let $\PX$ be the family of all probability distributions on $\X$.  
	
	A \emph{statistical functional} is a map $T: \X \to 2^\X$. Point-valued forecasts for functionals should be compared using consistent scoring functions \citep{Gneiting2011a}. A measurable map $S : \X^2\rightarrow \bar{\R}$ is a \emph{consistent scoring function for the functional $T$ relative to the class $\mathcal{P} \subseteq \PX$} if for all $x \in \X$, $F \in \mathcal{P}$, the integral $\int S(x,y)\diff F(y)$ exists and, for all $t\in T(F)$, $\int S(t,y)\diff F(y) \leq \int S(x,y) \diff F(y)$. For real-valued outcomes, that is $\mathcal{X}=\R$, the \emph{Brier score} (or \emph{quadratic score}) $\BS(x,y)=(x-y)^2$, and the \emph{quantile score} $\QS_\tau(x,y) =(\one{\{y \leq x\}}-\tau)(x-y)$ are main examples of consistent scoring functions for the mean functional, and for the quantile at level $\tau \in (0,1)$, respectively. A scoring function $S$ is consistent for the quantile functional at level $\tau \in (0,1)$ if and only if it is of the form $S_\tau(x,y)= (\one{\{y\leq x\}}-\tau) \big(g(x)-g(y)\big), x,y \in \R, $ for some increasing function $g:\R\to \R$. Similar characterization result for the mean and expectiles can be found in \citet{Gneiting2011a} alongside historical references for them. 
	
	Probabilistic forecasts quantify the uncertainty of the future event and are specified as a probability measure over all possible outcomes. Probabilistic forecasts should be compared and evaluated using proper scoring rules \citep{Gneiting2007a,WaghmareZiegel2025}. A \emph{proper scoring rule} is a function $\myS: \mathcal{P}\times \X \mapsto \bar{\R}$ for some class $\mathcal{P}\subseteq \PX$ such that $\myS(F, \cdot)$ is measurable for any $F\in \mathcal{P}$, the integral $\int\myS(G,y) \diff F(y)$ exists, and for all $F,G \in \mathcal{P}$, $\int\myS(F,y) \diff F( y) \leq \int\myS(G,y) \diff F( y)$. 
	That is, for a proper scoring rule $\myS$, the expected score $\E_{Y \sim F} \myS(G,Y)$ is minimized with respect to all distributional forecasts $G\in \P$, if we correctly forecast the true distribution $F$ of the random variable $Y$. The most commonly used proper scoring rules for the evaluation of real-valued outcomes are the logarithmic score, $S(F,y) = -\log f(y)$, where $f$ is the density of $F$, and the \emph{continuous ranked probability 
		score} (CRPS; \cite{Matheson1976}), which is defined for all $F\in \PR$ with finite first moment and $y\in \R$ as
	\begin{equation*}
		\crps(F,y)=\int (F(x)-\one\{y\leq x\})^2 \diff x. 
	\end{equation*}
	The CRPS is a continuous extension of the Brier score for binary events and is popular across application areas and methodological communities; see, e.g., \citet{Gneiting2005b}, \cite{IDR}, and \citet{WaghmareZiegel2025} for a comprehensive recent review.
	
	Propriety of scoring rules is maintained under normalization by predictable quantities in the following sense: Let $(\F_t)_{t\in \N}$ be some filtration and $S$ be a proper scoring rule. Then, $\tilde{S}_t=S/c_t$ is conditionally proper for any predictable  $(c_t)_{t\in \N}\subseteq (0,\infty)$, that is, for any predictable sequence of probabilistic forecasts $(G_t)_{t\in \N}$, and any adapted $(Y_t)_{t\in \N}$ with conditional distribution $F_t$ given $\F_{t-1}$, we have 
	\begin{align*}
		\E[{\tilde{S}_t(F_t,Y_t)}\mid  \F_{t-1}]
		=c_t^{-1}\E\left[S(F_t,Y_t)\mid \F_{t-1}\right]\leq c_t^{-1}\E\left[S(G_t,Y_t)\mid \F_{t-1}\right]
		=\E[{\tilde{S}_t(G_t,Y_t)} \mid \F_{t-1}],
	\end{align*}
	where the inequality follows by propriety of $S$ applied conditionally on $\F_{t-1}$.

	\section{Conditionally bounded score differences}\label{appendix:conditionally_bounded_scores}
	The following results show that CRPS and quantile score differences are conditionally bounded due to the predictability of the forecasts. 
	
	\begin{lem}
		For $x_1, x_2 \in \R$ and $\tau \in (0,1)$, the difference $\delta(y) = \QS_\tau(x_1,y)-\QS_\tau(x_2,y)$	is bounded in $y \in \R$ with $(\tau-1)\lvert x_1-x_2\rvert\leq\delta (y) \leq \tau \lvert x_1-x_2\rvert.$
	\end{lem}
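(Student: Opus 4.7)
The plan is to do a direct case analysis based on the position of $y$ relative to $x_1$ and $x_2$. Without loss of generality assume $x_1\le x_2$ (the other case follows by swapping the roles of $x_1$ and $x_2$, which replaces $\delta(y)$ by $-\delta(y)$ and also swaps the roles of the two bounds, noting that $\lvert x_1-x_2\rvert$ is symmetric). With this ordering, writing the quantile score out gives
\begin{equation*}
\delta(y) = (\one\{y\le x_1\}-\tau)(x_1-y) - (\one\{y\le x_2\}-\tau)(x_2-y).
\end{equation*}

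Next, I would split the real line into the three regions $\{y\le x_1\}$, $\{x_1 < y \le x_2\}$ and $\{y > x_2\}$. In the first region both indicators are $1$, so $\delta(y)=(1-\tau)(x_1-x_2)=(\tau-1)\lvert x_1-x_2\rvert$. In the third region both indicators vanish, which yields $\delta(y)=\tau(x_2-x_1)=\tau\lvert x_1-x_2\rvert$. In the middle region only the second indicator is $1$, giving $\delta(y)=-\tau(x_1-y)-(1-\tau)(x_2-y)= y-\tau x_1-(1-\tau)x_2$, which is linear and nondecreasing in $y$. Its endpoint values are $(1-\tau)(x_1-x_2)=(\tau-1)\lvert x_1-x_2\rvert$ at $y=x_1$ and $\tau(x_2-x_1)=\tau\lvert x_1-x_2\rvert$ at $y=x_2$, which matches the values in the two adjacent regions.

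Combining the three regions shows that $\delta(y)$ is a continuous, piecewise constant-then-linear-then-constant function of $y$ whose global minimum is $(\tau-1)\lvert x_1-x_2\rvert$ and whose global maximum is $\tau\lvert x_1-x_2\rvert$, which is exactly the claim. There is no real obstacle; the only thing to be careful with is bookkeeping of signs once the ordering assumption $x_1\le x_2$ is dropped, but this is handled cleanly by the symmetry observation made at the start.
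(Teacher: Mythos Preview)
Your case analysis on the three regions $\{y \le x_1\}$, $\{x_1 < y \le x_2\}$, $\{y > x_2\}$ is correct and is exactly the paper's approach (the paper simplifies $\delta(y)$ algebraically and then says ``the bounds follow directly by differentiating cases''). It cleanly establishes the stated inequalities under the assumption $x_1 \le x_2$.

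The symmetry reduction at the start, however, does not work. Swapping $x_1$ and $x_2$ indeed replaces $\delta$ by $-\delta$, but negating the chain $(\tau-1)\lvert x_1-x_2\rvert \le -\delta(y) \le \tau\lvert x_1-x_2\rvert$ only yields $-\tau\lvert x_1-x_2\rvert \le \delta(y) \le (1-\tau)\lvert x_1-x_2\rvert$, which is a \emph{different} pair of bounds unless $\tau = 1/2$. Concretely, take $\tau = 0.9$, $x_1 = 1$, $x_2 = 0$: for any $y > 1$ one computes $\delta(y) = -0.9$, violating the claimed lower bound $(\tau-1)\lvert x_1-x_2\rvert = -0.1$. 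So the displayed inequalities are in fact only valid for $x_1 \le x_2$; the paper's own ``w.l.o.g.''\ has the same gap. What the paper actually needs downstream is the symmetric consequence $\lvert \delta(y)\rvert \le \max\{\tau,1-\tau\}\lvert x_1-x_2\rvert$, and that \emph{does} follow from your case analysis in either ordering.
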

	
	\begin{proof} If $x_1\leq x_2$, then $\delta(y) 
		= y \one_{\{x_1 < y \leq x_2\}} - \tau (x_1-x_2) + x_1 \one_{\{y \leq x_1\}} - x_2\one_{\{y \leq x_2\}}$ and the bounds follow directly by differentiating cases. The case $x_1> x_2$ is analogous.
	\end{proof}
	
	\begin{lem}\label{lem:CRPS_cond_bounded}
		For $F_1,F_2\in \PR$, the CRPS difference $\delta(y) = \crps(F_1,y)-\crps(F_2,y)$
		is bounded in $y \in \R$, and attains its minimum and maximum at a crossing point of $F_1$ and $F_2$ (including $\{-\infty, + \infty\})$, respectively. 
	\end{lem}
	
	\begin{proof}
		We have $\delta(y)
		= \int F_1(x)^2-F_2(x)^2 \diff x + 2\int_{y}^\infty F_2(x)-F_1(x) \diff x,$
		where the first summand is independent of $y\in \R$. The function $x \mapsto F_2(x)-F_1(x)$ is bounded in $[-1,1]$, with limit $0$ as $x \to \pm \infty$ and changes sign at the crossing points of $F_1$ and $F_2$. 
	\end{proof}
	
	\begin{rem} The minimum and maximum in Lemma \ref{lem:CRPS_cond_bounded} is practically computable in application as illustrated below with the wind gust case study from Section \ref{Subsec:Windgust}. Here, the forecasts are either given as parametric (truncated logistic) distributions (EMOS, EMOS-GB, DRN), as distributions with finite support (ENS, MBM, IDR, HEN), or reported as quantiles on a fine grid of quantile levels (QRF, BQN), where the latter two groups may be considered as the same from a computational point of view. Thus, two CDFs may have infinitely many crossing points if they are either identical or if two (discrete) distribution are equal on some given interval. In the former case, the loss difference is trivially equal zero, and in the later case it is straightforward to check that, on the given interval, the absolute value of the CRPS difference is maximized at the boundaries of the interval. In either case, it is thus sufficient to consider a finite number of crossing points. Moreover, an analysis across all stations shows that the values of the maximum bounds are generally around 10, and only rarely reach about 20. Typically, we get these maximum bounds if we compare ENS with some other model, since the ENS forecasts' relatively large jumps result in big CRPS differences (area between the CDFs) when comparing them to (approximately) smooth distributional forecasts.
	\end{rem}
	
	\section{Multiple testing and proof of Theorem \ref{thm_SMCS_by_e-processes}}\label{app:pSMCS}
	
	We give some preliminaries on multiple testing corrections with e-processes before stating a proof for Theorem \ref{thm_SMCS_by_e-processes}. 
	
	Following \citet{Vovk_Wang_2021}, we call $f$ a \emph{symmetric e-merging function} if it is invariant with respect to permutations of its arguments and if $f(E_i, i \in I)$ is an e-value for any family of e-values $\{E_i \mid i \in I\}$ and any finite index set $I$. The most important e-merging function is the arithmetic mean as it essentially dominates any symmetric e-merging function \citep[Proposition 3.1.]{Vovk_Wang_2021}. The function $f$ is defined on $\bigcup_{i=1}^\infty \R^i$. In other words, it can be seen as a collection of functions on $\R^i$ for each $i \in \mathbb{N}$.
	
	In the spirit of \cite{Vovk_Wang_2021}, we say that non-negative stochastic processes $(E_{1,t}^\star)_{t\in \N}, \dots, (E_{m,t}^\star)_{t\in \N}$ for the hypotheses $\H_1, \dots, \H_m$ are \emph{family-wise valid (FWV)} if there exists a \emph{dominating family of e-processes}, that is, if there exists a family $\{(E^\Q_t)_{t\in \N}\mid \Q \in  \probM\}$ such that $(E^\Q_t)_{t\in \N}$ is an e-process for $\{\Q\}$ for all $\Q \in \probM$, and for any $i=1, \dots, m$ with $\Q\in \H_i$, it holds that \(E_t^\Q\geq E_{i,t}^\star\). Note that each process $(E^\star_{i,t})_{t\in\mathbb{N}}$ is an e-process for $\mathcal{H}_i$ by monotonicity.
	
	Let $(E_{1,t})_{t \in \N}, \dots, (E_{m,t})_{t \in \N}$ be e-processes for some hypotheses $\H_1, \dots, \H_m \subseteq \probM$, respectively, and let $f$ be a symmetric e-merging function. Then, by the closure principle \citep{Markus_etal_1976}, for any $t \in \N$, $E^{\star}_{i,t} = \min\{f(E_{j,t}, j \in I) \mid I \subseteq \{1, \dots, m\}, i \in I\}, i=1, \dots, m,$ are family-wise valid e-variables for the hypotheses $\H_1, \dots, \H_m$, see \citet[Section 5]{Vovk_Wang_2021}. That is, there exists a dominating family of e-variables $\{E_t^\Q \mid \Q\in \probM\}$, with $E_t^\Q \geq E^{\star}_{i,t}$ for any $i=1, \dots, m$ and $\Q \in \H_i$. Indeed, $E_t^\Q$ is explicitly given by $E_t^\Q = f( E_{i,t}, i \in I_\Q)$, where $I_\Q = \{i \mid \Q \in \H_i\}$. (If $I_\Q = \emptyset$, then $E_t^\Q = 1$.)
	
	Suppose that we adjust the e-processes $(E_{1,t})_{t \in \N}, \dots, (E_{m,t})_{t \in \N}$ by the closure principle with respect to the same e-merging function $f$ for all $t\in \N$. Then it follows that $(E^\star_{1,t})_{t \in \N}, \dots, (E^\star_{m,t})_{t \in \N}$ are family-wise valid e-processes for the hypotheses $\H_1, \dots, \H_m$, since $(E_t^\Q)_{t \in \N}$ are e-processes for \(\{\Q\}\) for all \(\Q\in \probM\). To see this, let $\tau$ be a stopping time. Then, $\E_\Q(E_\tau^\Q) = \E_\Q(f( E_{i,\tau}, i \in I_\Q)) \le 1$, since $E_{i,\tau}$, $i \in I_\Q$ are e-values under $\Q$.

	\begin{proof}[Proof of Theorem \ref{thm_SMCS_by_e-processes}]
		Let $\Q \in \probM$. We have  $i\in \M^{\bullet, \star}$ if and only if $\Q \in \H^{\bullet}_{i\cdot}$. Therefore,
		\begin{align*}
			\Q(\exists t \in \N: \M^{\bullet, \star} \not\subseteq \widehat{\M}_t ) 
			&=\Q(\exists t \in \N:E_{i\cdot,t}^{\star} \geq 1/\alpha \textrm{ for some }i \in \M^{\bullet, \star})\\
			&= \Q\left(\bigcup_{t \in \N} \bigcup_{i \in \M^{\bullet,\star}}  \{E_{i\cdot,t}^{\star} \geq 1/\alpha\}\right)\\
			&\leq \Q\left(\bigcup_{t \in \N} \bigcup_{i \in \M^{\bullet,\star}}  \{E_{t}^{\Q} \geq 1/\alpha\}\right) = \Q(\exists t \in \N:E_{t}^{\Q} \geq 1/\alpha) \le \alpha,
		\end{align*}
		where $E_t^\Q = f(E_{i\cdot,t}, i \in I_\Q)$ with $I_\Q = \{i \mid  \Q \in \H^\bullet_{i\cdot}\}$. It follows directly that the running intersection $ \widetilde{\M}_t= \bigcap_{r\leq t} \widehat{\M}_r, t\in \N$, is an SMCS as well.
	\end{proof}

	\begin{rem}
		For the construction of SMCSs in Section \ref{Subsec:SMCS_by_p_processes}, we merged the pairwise e-processes \((E_{ij,t})_{t\in \N}\) into a single e-process \(E_{i\cdot,t}^\star\) by averaging over \(j\), and then adjusted the merged e-processes using the closure principle.
		Alternatively, we could directly work with the pairwise e-processes without first merging them.
		More specifically, starting with the pairwise e-processes \((E_{ij,t})_{t\in \N}\), for all pairs \((i,j)\in \M_0\), one can directly use the closure principle \citep[Algorithm~1]{Vovk_Wang_2021} to derive the adjusted e-processes
		
		\begin{equation*}
			E^\star_{ij,t}=\min_{I \subseteq \{1, \dots, m\}^2: (i,j)\in I} \frac{1}{\lvert I \rvert} \sum_{(k,\ell) \in I} E_{k\ell,t}, \quad i,j\in \M_0, t \in \N.
		\end{equation*}
		Finally, for some significance level $\alpha\in (0,1)$, we would define
		\begin{equation*}
			\widehat{\M}_t= \left\{i \in \M_0 \mid  E_{ij,t}^{\star} <  1/\alpha, \text{ for all } j\in \M_0 \right \}, \quad t\in \N.
		\end{equation*}
		However, the corresponding tests would be less powerful, since for any $i\in \M_0$ and $j\neq i$,
		\begin{align*}
			\allowdisplaybreaks
			E^{\star }_{ij,t} &= \min\left\{\frac{1}{\lvert I \rvert }\sum_{(k,l)\in I}E_{kl,t}\;\Big\vert\; I \subseteq \{1, \dots, m\}^2, (i,j)\in I \right\} \\
			& \leq \min\left\{\frac{1}{\lvert I \rvert (m-1) }\sum_{k\in I} \sum_{l =1}^m E_{kl,t} \;\Big\vert\; I \subseteq \{1, \dots, m\}, i \in I \right\} = E_{i \cdot}^{\star}. 
		\end{align*}
	\end{rem}
	
	\section{SMCSs for marginal coverage}\label{appendix:marginal_coverage}
	In this section, we illustrate how our methods readily extend to the weaker marginal coverage guarantee where the multiple testing correction from the previous section may be omitted. 
	
	In analogy to Section \ref{Sec:SMCS}, we call we call a sequence $(\check{\M}_t)_{t \in \N} \subseteq \M_0$ a \emph{marginal coverage SMCS} for the superior objects $\M^{\bullet, \star}$, $\bullet \in \{\mathrm{s},\mathrm{uw}\}$ at level $\alpha\in (0,1)$ if, for any $\Q\in \probM$, 
	\begin{equation}\label{eq:def_marginal_coverage}
		\Q(\forall t\geq 1: i \in \check{\M}_t  ) \geq 1-\alpha \quad \textrm{for all }i\in \M^{\bullet, \star}.
	\end{equation}
	We refer to condition \eqref{eq:def_marginal_coverage} as time-uniform \emph{marginal} coverage. Clearly, this is a weaker requirement than given at \eqref{eq:def_coverage}, forming a time-uniform \emph{simultaneous} coverage.
	
	As in Section \ref{Subsec:SMCS_by_p_processes}, for $\bullet \in \{\mathrm{s},\mathrm{uw}\}$, assume that, for any $i,j\in\M_0$, $(E_{ij,t})_{t \in \N}$ is an e-process for the hypothesis $\H^\bullet_{ij}$, and let $E_{i\cdot,t} =  1/(m-1)\sum_{j \neq i} E_{ij,t}$ be the arithmetic mean e-process for the intersection hypothesis $\H^{\bullet}_{i\cdot} = \cap_{j\neq i} \H^\bullet_{ij}$.
	For $\alpha\in (0,1)$, define
	\begin{equation}\label{eq:def_SMCS_strong_hypthesis_e_process_marginal_coverage}
		\check{\M}_t= \left\{i \in \M_0 \mid  E_{i\cdot,t} <  1/\alpha 
		\right \}, \quad t\in \N.
	\end{equation}
	Then, the following adaption of Theorem \ref{thm_SMCS_by_e-processes} follows directly.
	
	\begin{thm}\label{thm_marginal_coverage_SMCS_by_e-processes}
		For any $\alpha \in (0,1)$, the sequence $(\check{\M}_t)_{t \in \N}$ defined at \eqref{eq:def_SMCS_strong_hypthesis_e_process_marginal_coverage} is a marginal coverage SMCS at level $\alpha$ for $\M^{\bullet, \star}$, $\bullet \in \{\mathrm{s},\mathrm{uw}\}$, and so is its running intersection. 
	\end{thm}
	
	\begin{rem}
		If $|\M^{\bullet, \star}|=1$, $\bullet \in \{\mathrm{s},\mathrm{uw}\}$, then \eqref{eq:def_coverage} and \eqref{eq:def_marginal_coverage} coincide, and we do not have to correct for multiple testing if there exists exactly one superior object. 
	\end{rem}
	
	Since weakly superior objects can change over time we have to adapt the definition of marginal coverage at \eqref{eq:def_marginal_coverage} for weakly superior objects. For each $i\in \M_0$, we let \begin{equation*}
		I^\star(i) = \{t \in \N \mid \Delta_{ij,t}\leq 0 \textrm{ a.s. for all }j\in \M_0\}, 
	\end{equation*}
	and call a sequence $(\tilde{\M}_t)_{t \in \N} \subseteq \M_0$ a 
	\emph{marginal coverage SMCS} for the weakly superior objects $(\wsupM_t)_{t\in \N}$,  at level $\alpha\in (0,1)$ if, for any $\Q\in \probM$, 
	\begin{equation*}
		\Q(\forall t\in I^\star(i) : i \in \tilde{\M}_t  ) \geq 1-\alpha \quad \textrm{for all }i\in \M_0.
	\end{equation*}
	
	Next, we show how one can construct marginal coverage SMCSs for the weakly superior objects. For $i \in \M_0$, define $\bDelta_{i,t} = (\Delta_{ij,t})_{j=1}^m$ and, for $\bX =(x_i)_{i=1}^m\in \R^m$,
	\[
	M_t^{(i)}(\bX) = \frac{1}{m} \sum_{j=1}^m M_{ij,t}(x_j),
	\]
	where $M_{ij,t}(x)$ is such that $(M_{ij,t}(\Delta_{ij,t}))_{t \in \N}$ is a test supermartingale for any $\Q \in \mathfrak{B}(\Omega)$. Then, $(M_t^{(i)}(\bDelta_{i,t}))_{t \in \N}$ is a test supermartingale for any $\Q \in \mathfrak{B}(\Omega)$. 
	For $\alpha \in (0,1)$, we define $m$ confidence regions
	\[
	C_t^{(i)} = \left\{\bX \in \R^m \mid M_t^{(i)}(\bX) < \frac{1}{\alpha}\right\}, \quad i \in \M_0,
	\]
	and set
	\begin{equation}\label{eq:marginal_coverage_SMCS_weakly_superior_objects}
		\tilde{\M}_t = \{i \in \M_0 \mid \forall j \not=i: C_t^{(i)}\cap \R_{j,-}^m \not= \emptyset\},
	\end{equation}
	where $\R_{j,-}^m = \{\bX \in \R^m \mid x_j \leq 0\}$. This means that, analogously to Section \ref{Subsec:SMCS_for_weakly_superior_objects}, we exclude $i$ from $\tilde{\M}_t$ if there exists a $j \not=i$ such that we may reject $\Delta_{ij,t} \le 0$, which is the case, if and only if $C_t^{(i)}\cap \R_{j,-}^m = \emptyset$ for some $j \not= i$. 
	\begin{thm} For any $\alpha \in (0,1)$, the sequence $(\tilde{\M}_t)_{t \in \N}$ defined at \eqref{eq:marginal_coverage_SMCS_weakly_superior_objects} is a marginal coverage SMCS for the weakly superior objects $(\wsupM_t)_{t\in \N}$. 
	\end{thm}
	
	\begin{proof}
		For any $\Q \in \mathfrak{B}(\Omega)$, and any $i \in \M_0$, we have
		\begin{align*}
			\Q(\exists t \in I^\star(i): i \notin \tilde{\M}_t) &=  \Q(\exists t \in I^\star(i), \exists j \in \M_0 : C_t^{(i)}\cap \R_{j,-}^m = \emptyset)\\
			&\leq \Q(\exists t \in I^\star(i) : M_t^{(i)}(\bDelta_{i,t}) \geq 1/\alpha) \leq \alpha\,.
		\end{align*}
	\end{proof}

	In Simulations 1 and 2, there is always an unique superior object, so marginal and uniform coverage are equivalent.
	Therefore, we can compute the SMCS without having to adjust the e-processes.
	Figure \ref{fig: no_adjustment} shows that this results in smaller confidence sets. Analogously, we compare the marginal coverage SMCS for the weakly superior objects given by \eqref{eq:marginal_coverage_SMCS_weakly_superior_objects} with the original SMCS from Section \ref{Subsec:SMCS_for_weakly_superior_objects}. This comparison is illustrated in Figure \ref{fig:Simulation3_marginal_coverage_results}.
	As expected, the average size of the SMCS given by \eqref{eq:marginal_coverage_SMCS_weakly_superior_objects} is generally smaller.
	
	\begin{figure}[ht]
		\centering
		\includegraphics[width=0.75\linewidth]{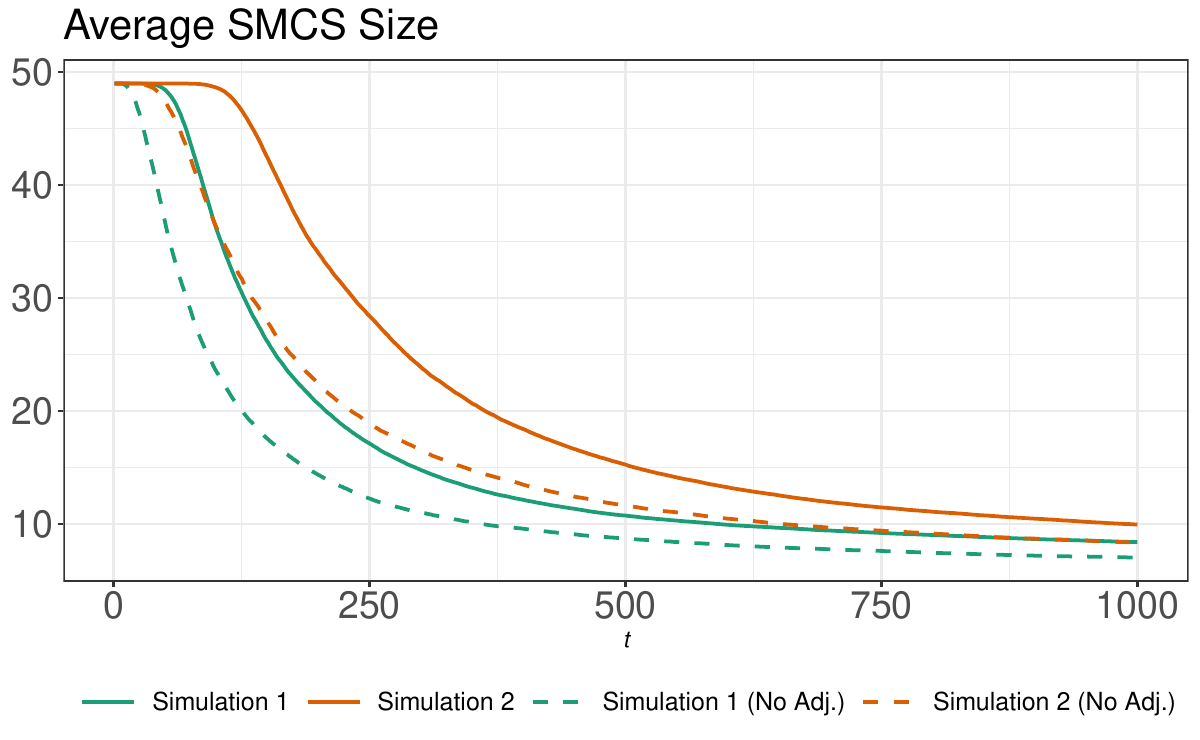}
		\caption{The SMCS shrinks slightly when uniform coverage is replaced by marginal coverage.
			In Simulations 1 and 2, marginal coverage and uniform coverage are equivalent, due to the existence of a unique superior object.}
		\label{fig: no_adjustment}
	\end{figure}

	\begin{figure}[h]
		\centering
		\includegraphics[width =0.8 \textwidth]{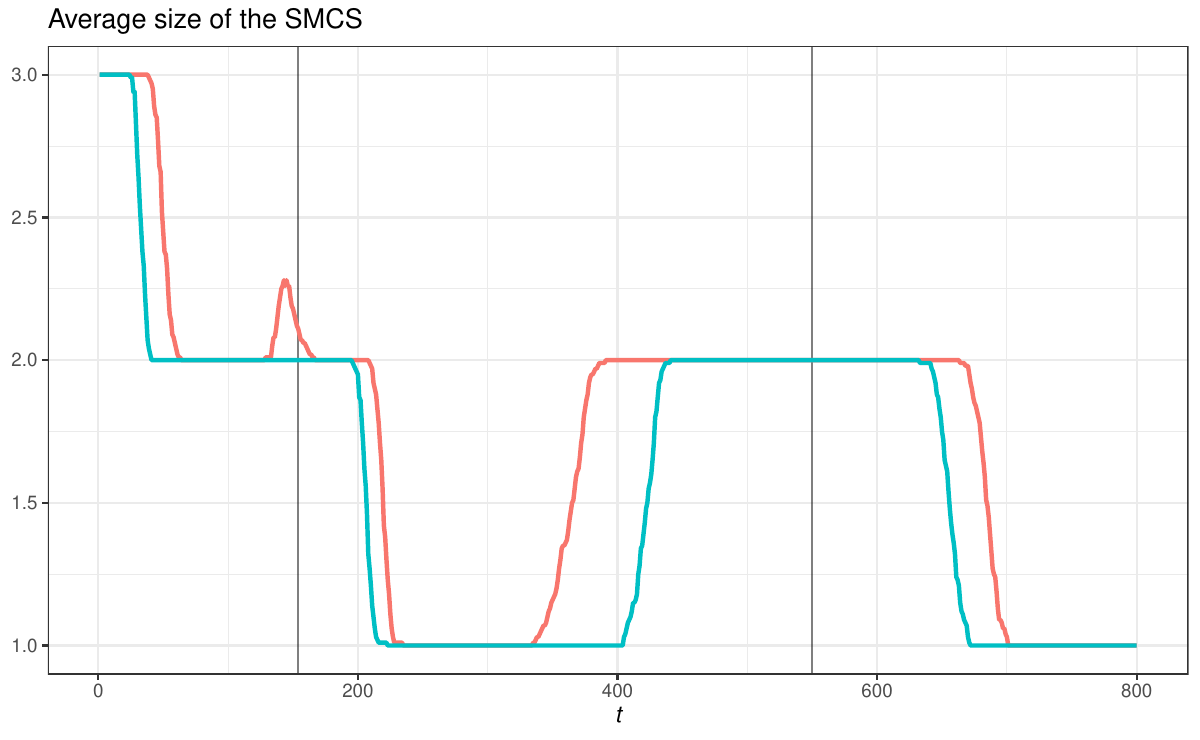}
		\caption{Size of the SMCS over time in Simulation 3, averaged over $100$ simulations, for marginal coverage (blue) compared to the original approach from Section \ref{Subsec:SMCS_for_weakly_superior_objects} (red), see the original Figure \ref{fig:Simulation3_results}. Also for the marginal coverage SMCS we have 100\% coverage of the superior object at all time points.}
		\label{fig:Simulation3_marginal_coverage_results}
	\end{figure}
	
	\section{SMCSs by FDR control}\label{appendix:FDR_control}
	
	For $i=1, \dots, m$, let $(\Psi_{i,t})_{t\in \N}$ be a sequential test for some hypothesis $\H_i\subseteq \probM$. We let $R_{t} = \sum_{i=1}^m \Psi_{i,t}$ be the number of all rejected hypotheses (discoveries) at $t\in \N$. For $\mathbb{Q} \in \bigcup_{i=1}^m \H_i$, we let $F_t(\mathbb{Q}) = \sum_{i \in I_\mathbb{Q}}\Psi_{i,t}$ be the number of all true null hypotheses that are rejected at time $t$. The \emph{false discovery rate (FDR)} at $t\in \N$ of the given testing procedures is defined as
	\begin{equation*}
		\textrm{FDR}_t=  \sup_{\Q \in \bigcup_{i=1}^m \H_i}  \mathbb{E}_{\mathbb{Q}} \left[\frac{F_t(\Q)}{\max\{1,R_t\}}\right]. 
	\end{equation*}
	We say that the sequential tests $(\Psi_{1,t})_{t\in \N}, \dots, (\Psi_{m,t})_{t\in \N}$ \emph{time-uniformly control the false discovery rate at level} $\alpha \in (0,1)$ if, for all (possibly infinite) stopping times $\tau$, $\textrm{FDR}_\tau \le \alpha .$ 
	\cite{Wang_Ramdas_FDR_control} study FDR control with e-values in a nonsequential setting. Looking carefully at the proof of their Theorem 2, we see that their arguments directly extend to e-processes to obtain sequential tests that time-uniformly control FDR as explained in the following. 
	
	Let $(E_{1,t})_{t \in \N}, \dots, (E_{m,t})_{t \in \N}$ be e-processes for the hypotheses $\H_1, \dots, \H_m \subseteq \probM$. Following \cite{Wang_Ramdas_FDR_control}, we let $E_{[1],t} \ge\dots \ge E_{[m],t}$ be the order statistics of $E_{1,t},\dots,E_{m,t}$ in decreasing order and define $i^\star_t = \max\{i \in \{1,\dots,m\} \;\vert\; {iE_{[i],t}} \ge m/{\alpha}\}$, for a given $\alpha \in (0,1)$.
	Then, we have a time-uniform FDR control at level $\alpha$ for the sequential tests $\Psi_{i,t}= (\one\{E_{i,t} \ge E_{[i_t^\star],t}\}), i=1, \dots, m, t \in \N.$ As our construction for the strong and uniformly weak hypothesis builds on e-processes, we can use the above sequential tests with respect to the merged e-processes $( E_{1\cdot, t})_{t\in \N}, \dots, ( E_{m\cdot,t})_{t\in \N}$ given in Section \ref{Subsec:SMCS_by_p_processes}.
	That is, at each $t\in \N$, we reject the $i^\star_t$ hypotheses with the largest values of the corresponding e-processes, and obtain the alternative SMCS $\check{\M}_t = \{i \in \M_0 \mid E_{i\cdot, t}< E_{[i_t^\star]\cdot, t} \}$.
	Importantly, this sequence does not necessarily satisfy the time-uniform coverage property from Section \ref{Sec:SMCS} anymore. In contrast, it bounds the expected ratio of wrongly rejected models to the total number of rejected models at any random time by $\alpha$, that is, for any $\Q\in \probM$ and any random stopping time $\tau$, we have $\mathbb{E}_{\mathbb{Q}} \big[ {\lvert \check{\M}^c_\tau \cap \supM_\tau \rvert/\max\{1,\lvert \check{\M}_\tau^c \rvert \}}\big] \le \alpha,$ where $\check{\M}^c_\tau= \{1, \dots, m\} \setminus \check{\M}_\tau$ and $(\supM)_{t \in \N}$ is given by \eqref{eq:def_strongly_superior_objects} or \eqref{eq:def_uniformly_weakly_superior_objects}.
	
	\section{Proofs for Section \ref{Subsec:SMCS_for_weakly_superior_objects}}\label{Appendix:proofs}
	\begin{proof}[Proof of Theorem \ref{thm:3.5}]
		For any $\Q\in \probM$, we have
		$\Q(\exists t \in \N: \wsupM_t \not \subseteq \widehat{\M}_t )\leq\Q(\exists t \in \N : \bDelta_{t} \notin C_t) \le \alpha$ by Ville's inequality. Moreover, we have
		\begin{align*}
			\Q(\forall t \geq 1: \uwsupM \subseteq \widetilde{\M}_t)&=
			\Q\left(\forall t \geq 1: \uwsupM\subseteq\widehat{\M}_t\right)\\
			&\geq 
			\Q\left(\forall t \geq 1: \wsupM_t\subseteq\widehat{\M}_t\right)\geq 1-\alpha.
		\end{align*}
	\end{proof}

	\begin{proof}[Proof of Proposition \ref{prop:convex_upper_set}]
		For $t\in \N$, we have $C_{t,1-\alpha}=\{\bX \in \diagM \mid \lVert M_{t}(\bX) \rVert_1 \leq m(m-1)/\alpha\}$, for $M_t: \diagM \to \diagM, \bX \mapsto \left(M_{ij,t}(X_{ij})\right)_{i\neq j}$
		where $\lVert \bX \rVert_1 = \sum_{i,j=1}^m \lvert X_{ij}\rvert $ for $\bX \in \R^{m\times m}$. Recall that, for any $i\neq j$, $M_{ij,t}$ is a nonnegative, convex and decreasing function. Thus, the second claim follows immediately since $\bA \leq \bB$ implies $M_t(\bA)\geq M_t(\bB)$. Let $\bA, \bB \in C_{t,1-\alpha}$, then, for any $\lambda \in [0,1]$,
		\begin{align*}
			\lVert M_t\big(\lambda \bA + (1-\lambda)\bB \big)\rVert_1
			&\leq \lVert \lambda M_t(\bA) + (1-\lambda) M_t(\bB) \rVert_1 \\
			&\leq \lambda \lVert  M_t(\bA)\rVert_1 + (1-\lambda) \lVert M_t(\bB) \rVert_1 \leq \frac{m(m-1)}{\alpha},
		\end{align*}
		where the first inequality holds by convexity and the nonnegativity of $M_t$, the second one by the axioms of a norm, and the third one by the fact that $\bA, \bB \in C_{t,1-\alpha}$. 
	\end{proof}
	
	\section{Further simulation results}
	
	\subsection{Uniformly weakly superior objects for unbounded losses}\label{Appendix:MSE_Simulation}
	
	In Supplement \ref{appendix:conditionally_bounded_scores}, it is shown that the score differences for quantile forecasts and probabilistic forecasts evaluated with the CRPS are conditionally bounded. In contrast, the highly relevant mean squared error (MSE) for mean forecasts does not satisfy this property.  In the following, we present a simulation example where the MSE differences are conditionally sub-exponential. By using a small number of models, we enable graphical monitoring of the e-processes as measure of evidence over time.
	
	For some data $(Y_t)_{t\in \N}$, we consider $m=9$ mean forecasters issuing mean-forecasts $$m_{i,t}= Y_t + X_{i,t}, \quad t\in \N,$$ with i.i.d.\ normally distributed errors $X_{i,t} \sim \mathcal{N}(\varepsilon_{i,t}, 1+ \delta_{i,t})$ independent of the the sequence $(Y_t)_{t\in \N}$. 
	Analogously to Simulation 2, we let $(\varepsilon_{i,t},\delta_{i,t}) \in \{-0.5, 0, 0.5\}^2$ be fixed over time, with the only exception that the superior model $i_0$ with zero bias and lowest variance is worse on Sundays, more precisely, $\delta_{i_0,t}=-0.5$ and $\varepsilon_{i_0,t}=0.6 \cdot \one_{\{t \in 7\N\}}$. If we assess the mean predictions by the MSE, we obtain the following conditional expected score differences, independent of the data $(Y_t)_{t\in \N}$,
	\begin{equation*}
		L_{ij,t}= X_{i,t}^2 - X_{j,t}^2, \quad \mu_{ij,t}= \E(d_{ij,t})=\varepsilon_{i,t}^2 + \delta_{i,t} - (\varepsilon_{j,t}^2 + \delta_{j,t}), \quad t\in \N. 
	\end{equation*}
	Recall that the square of a Gaussian random variable is sub-exponential, and the difference of sub-exponential random variables is again sub-exponential. Therefore, for all $i,j \in \M_0$, the loss differences $(d_{ij,t})_{t\in\N}$ are sub-exponential with constants $\nu_{ij}=2(1+\delta_i)+2(1+\delta_j)$ and $\alpha_{ij}=4(1+\max\{\delta_i,\delta_j\})$, that is, for all $t\in \N$,
	\begin{equation*}
		\E\left(e^{\lambda(d_{ij,t}-\mu_{ij,t})-\nu_{ij}^2 \lambda^2/2} \big |\F_{t-1}\right)\leq 1, \quad |\lambda|<1/\alpha_{ij}.
	\end{equation*}
	It follows that the cumulative product of the increments on the left-hand side is a non-negative supermartingale, and thus  
	\begin{equation}\label{e-processes_sub_exp_V2}
		E_{ij,t}= \exp\left\{\lambda_{ij} t\hat{\Delta}_{ij,t}-\frac{t}{2}\nu_{ij}^2\lambda_{ij}^2 \right\}, \quad t\in \N,
	\end{equation}
	is an e-process for $\H_{ij}^\textrm{uw}$, for any $0\leq \lambda_{ij}<1/\alpha_{ij}$. Note that the e-process at \eqref{e-processes_sub_exp_V2} could equivalently be expressed as in Proposition \ref{prop:e-process_choe_ramdas} with the variance process $V_{ij,t}=t\nu_{ij}$, see \citet[Proposition 5, Appendix E]{Howard_Chernoff_bounds} for a discussion of the two equivalent notions of the sub-exponentiality property. 
	We construct an SMCS for the uniformly weakly superior objects as given in Section \ref{Subsec:SMCS_by_p_processes} for $\alpha=0.1$. Figure \ref{fig:MSE_simulation_results} shows the average size of the SMCS over time with the averaged e-processes given in Figure \ref{fig:MSE_simulation_e-processes}. 
	
	\begin{figure}[h]
		\centering
		\includegraphics[width =0.5 \textwidth]{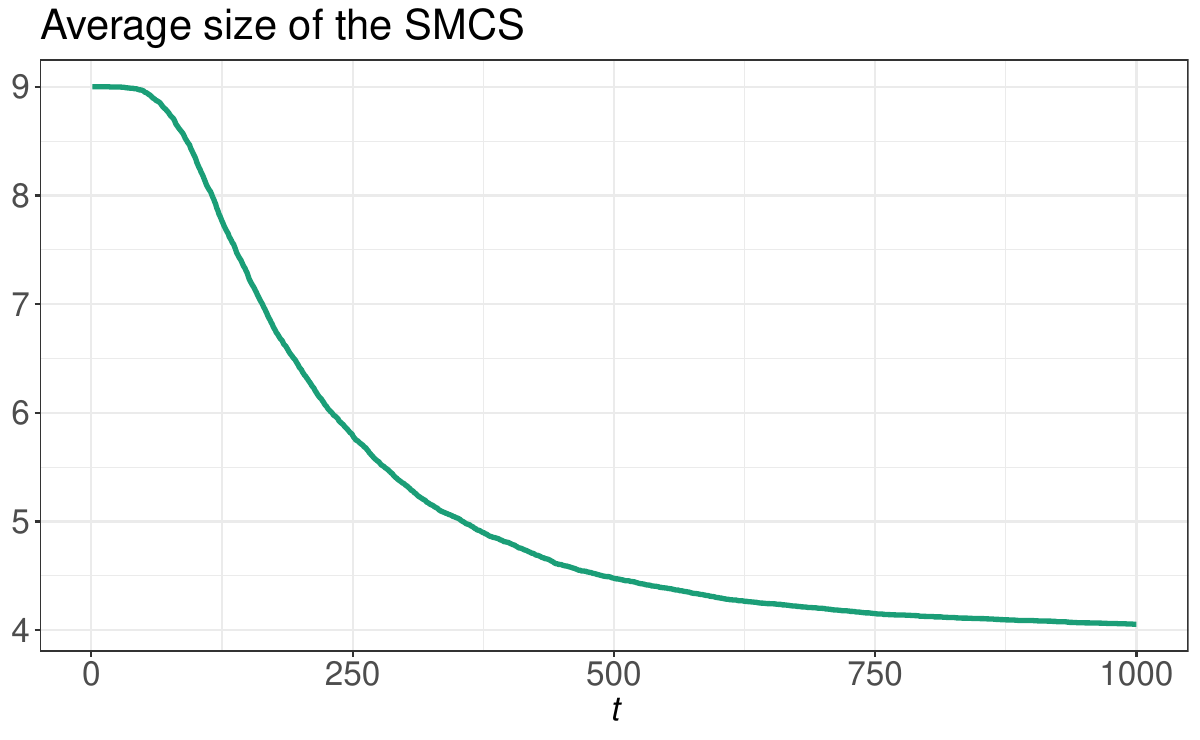}
		\caption{Size of the SMCS over time, averaged over $N=1000$ simulations. The SMCS never wrongly rejects the superior model $i_0$ and contains $4.06$ models on average at the end of the evaluation period.}
		\label{fig:MSE_simulation_results}
	\end{figure}

	\begin{figure}[h]
		\centering
		\begin{minipage}[t]{0.5\textwidth}
			\vspace{0pt} 
			\centering
			\includegraphics[width=\linewidth]{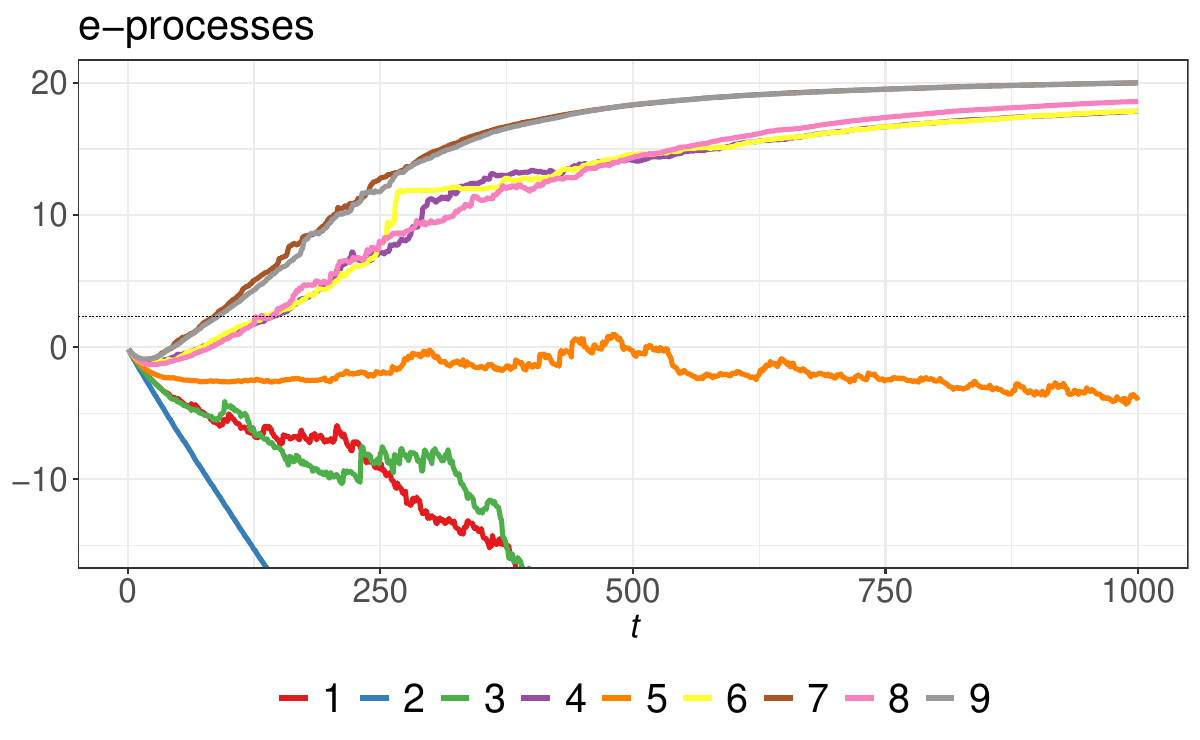}
		\end{minipage}
		\hfill
		\begin{minipage}[t]{0.45\textwidth}
			\vspace{0pt} 
			\centering
			\footnotesize
			\begin{tabular}{rrr}
				\textbf{Model} & $\varepsilon$ & $\delta$ \\
				\hline
				1 & $-0.5$ & $-0.5$ \\
				2 & $0.0/0.6$  & $-0.5$ \\
				3 & $0.5$  & $-0.5$ \\
				4 & $-0.5$ & $0.0$  \\
				5 & $0.0$  & $0.0$  \\
				6 & $0.5$  & $0.0$  \\
				7 & $-0.5$ & $0.5$  \\
				8 & $0.0$  & $0.5$  \\
				9 & $0.5$  & $0.5$  \\
				\hline
			\end{tabular}
		\end{minipage}
		\caption{Left: The averaged adjusted e-processes $(E_{i\cdot, t}^\star)_{t\in \N}$ on a logarithmic scale for the models $i=1, \dots, 9$ with the critical value $\log(1/\alpha)\approx 2.3$ given as a horizontal line. Right: Parameters of the models with superior model given by $i_0=2$.}
		\label{fig:MSE_simulation_e-processes}
	\end{figure}
	
	\subsection{Strongly superior objects for non-Gaussian data}\label{app:F.2}
	
	Simulations 1 and 2 in the main text showcase the ability of sequential model confidence sets to deal with sequences of dependent data, while preserving anytime-validity.
	In these simulations, our sequential methods also exhibit remarkable statistical power.
	As shown in Figure \ref{fig:Simulations1&2_results}, the original model set, which contains 49 different models, is narrowed down to less than 10 models after 1000 time steps.
	The data points are sampled from Gaussian distributions.
	One possible concern is that this design is not challenging enough, so the estimated statistical power might be overly optimistic.
	
	In this section, we provide some empirical results on the performance of our methods in a non-Gaussian setting.
	We also compare them with the original model confidence set (MCS) of \cite{MCS}.
	Even though the original MCS is not anytime-valid, it is still be interesting to compare its size to our anytime-valid SMCSs.
	The simulation design is similar to the one of Simulation 1: we set \(Y_0=0\) for convenience, and, for \(n=1000\), we sample \((Y_t)_{t=1}^n\) from the mixture of the Gaussian distributions
	\begin{equation*}
		\mathcal{N}\left(\arctan(Y_{t-1}), 1+\sqrt{|Y_{t-1}|}\right),\quad  \mathcal{N}\left(-\arctan(Y_{t-1}), 1+\sqrt{|Y_{t-1}|}\right)
	\end{equation*}
	with equal weights.
	The use of \(\arctan(\cdot)\) prevents against violent oscillations over different orders of magnitude, which complicates the computation of the predictable bounds \(c_{ij,t}\), and is rare in practice.
	The forecasting models are defined by introducing bias and dispersion parameters \(\varepsilon,\delta\), in the same spirit with Simulations 1 and 2: at each step \(t\geq 1\), forecaster \(i\) outputs the mixture \(F_{i,t}\) of the Gaussian distributions
	\begin{equation*}
		\mathcal{N}\left(\arctan(Y_{t-1}) + \varepsilon_i, 1+\sqrt{|Y_{t-1}|}+\delta_i\right),\quad  \mathcal{N}\left(-\arctan(Y_{t-1}) + \varepsilon_i, 1+\sqrt{|Y_{t-1}|} + \delta_i\right),
	\end{equation*}
	where \((\varepsilon,\delta)\) ranges over the set \(\{-0.6,-0.4,\ldots,0.4,0.6\}^2\).
	The CRPS loss at a point \(y\), for a Gaussian mixture \(F\) with equally weighted components \(\mathcal{N}(\mu_1,\sigma_1^2)\) and \(\mathcal{N}(\mu_2,\sigma_2^2)\) is given by the following formula \citep{Jordan_ScoringRules}:
	\begin{equation*}
		\frac{1}{2}A(y-\mu_1,\sigma_1^2)+\frac{1}{2}A(y-\mu_2,\sigma_2^2)-\frac{1}{2}A(0,2\sigma_1^2)-\frac{1}{2}A(0,2\sigma_2^2)-\frac{1}{2}A(\mu_1-\mu_2,\sigma_1^2+\sigma_2^2),
	\end{equation*}
	where \(A(\mu,\sigma^2)=\mu\left(2\Phi(\mu/\sigma)-1\right)+2\sigma\cdot \varphi(\mu/\sigma)\), and \(\Phi,\varphi\) denote the CDF and the density of the standard normal distribution, respectively.

	In contrast to Simulation 1, \(\mu_{ij,t}=\E\left[d_{ij,t}\mid \F_{t-1}\right]\) now depends on \(t\), so the relative performance of the different models can vary with time.
	This property makes this setting more realistic, but it raises an important question about the applicability of the MCS of \cite{MCS}.
	In particular, the theory in that paper is developed around the assumption that the unconditional mean \(\E[d_{ij,t}]\) does not depend on \(t\).
	A simulation of \(10000\) trajectories of the sequence \((Y_t)_{t=1}^{1000}\) and the corresponding forecasts \(F_{i,t}\) confirms that \(\E[L_{i,t}]\) does not have any clear temporal trends, see Figure \ref{fig:expected_CRPS}.
	Therefore, the assumption of a time-independent expected loss is approximately satisfied.
	
	\begin{figure}[H]
		\centering
		\includegraphics[width=0.5\linewidth]{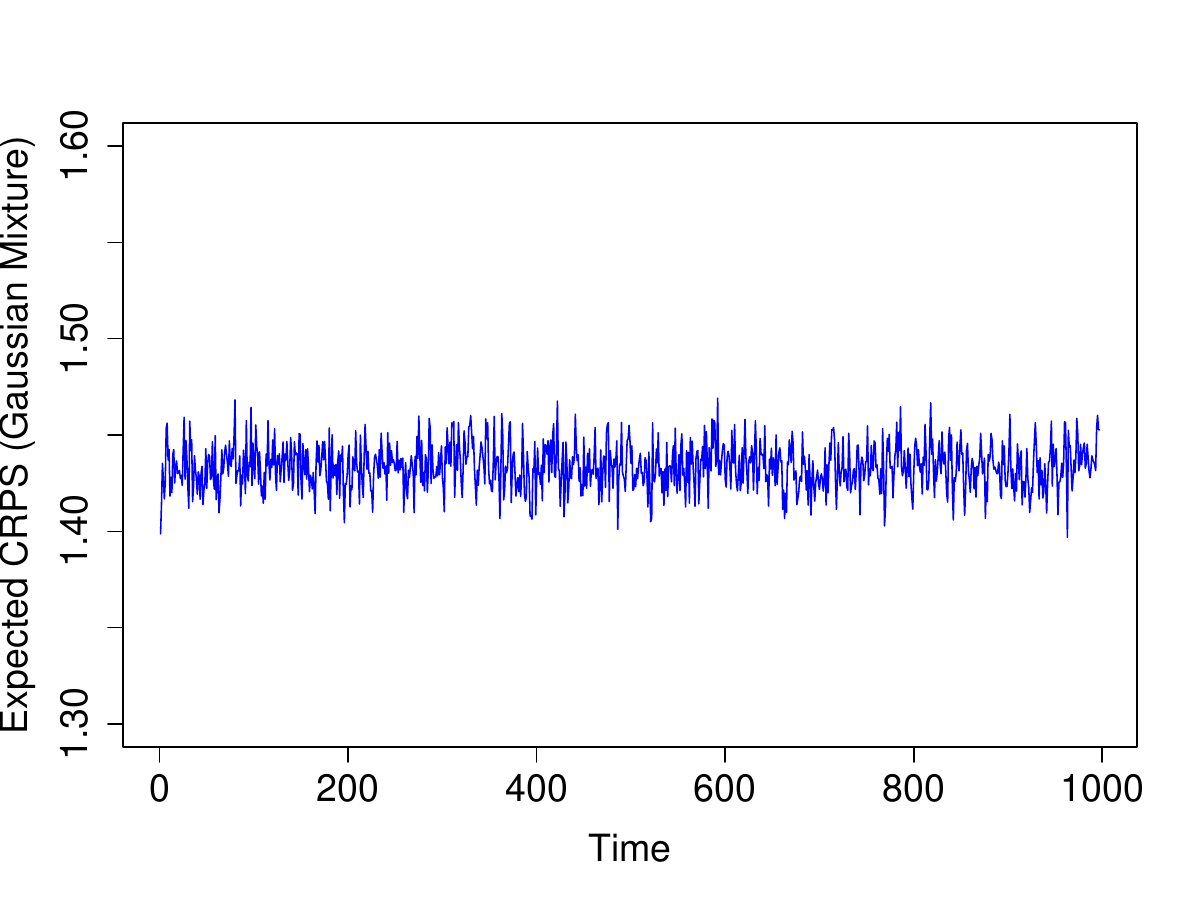}
		\caption{Expected CRPS of the forecaster that corresponds to \((\varepsilon,\delta)=(-0.4, -0.6)\).}
		\label{fig:expected_CRPS}
	\end{figure}
	
	The MCS procedure is applied sequentially in the following way: We start with the original set \(\mathcal{M}_0=\widehat{\M}_0\) containing all \(49\) models.
	During the first \(50\) time points we do not perform any statistical tests.
	Instead, we only observe the outcomes and the report the loss of each forecast.
	This warm-up period is necessary due to the Bootstrap estimation of the variance that the MCS algorithm makes use of.
	Afterwards, for each time point \(t = 51, 52, \ldots, 1000\):
	\begin{enumerate}
		\item We observe the \(t\)-th outcome \(Y_t\).
		\item We test the hypothesis that all remaining models have equal expected loss, namely,
		\begin{equation*}
			H_{0,\widehat{\M}_{t-1}}: \mu_{ij} = 0  \text{ for all } i,j\in \widehat{\M}_{t-1},
		\end{equation*}
		where \(\mu_{ij}=\E[d_{ij,t}]\) denotes the expected loss difference between models \(i\) and \(j\).
		As we mentioned earlier, it is assumed that \(\mu_{ij}\) does not depend on \(t\), so we have omitted the index \(t\).
		We use the test statistic
		\begin{equation*}
			T_{\text{R},\text{M}}=\max_{i,j\in \widehat{\M}_{t-1}} \left|t_{ij}\right|, \text{ where } t_{ij}=\frac{\overline{d}_{ij}}{\sqrt{\widehat{\text{var}}\left(\overline{d}_{ij}\right)}},
		\end{equation*}
		and \(\overline{d}_{ij}=n^{-1}\sum_{s=1}^t d_{ij,s}\) is the average relative loss of models \(i,j\).
		The quantity \(\widehat{\text{var}}\left(\overline{d}_{ij}\right)\) is a Bootstrap estimate of \(\text{var}\left(\overline{d}_{ij}\right)\) based on the first \(t\) observed points.
		\item If \(H_{0,\widehat{\M}_{t-1}}\) is not rejected, then we set \(\widehat{\M}_t=\widehat{\M}_{t-1}\) and \(t\leftarrow t+1\), and we go back to step 1.
		Else, we eliminate the worst-performing model, namely, the one indexed by
		\begin{equation*}
			e_{\text{R},\text{M}}=\argmax_{i\in \widehat{\M}_{t-1}}\left\{\sup_{j\in \widehat{\M}_{t-1}} \frac{\overline{d}_{ij}}{\sqrt{\widehat{\text{var}}\left(\overline{d}_{ij}\right)}}\right\}.
		\end{equation*}
		We then set \(\widehat{\M}_{t-1}\leftarrow \widehat{\M}_{t-1}\backslash \{e_{\text{R,M}}\}\) and go back to step 2.
	\end{enumerate}
	
	For the variance estimation in step 2, we used \(B=2000\) Bootstrap samples.
	For comparison, \cite{MCS} used \(B=1000\) samples in their simulations.
	In the documentation of the \texttt{R}-package \texttt{MCS}, \cite{MCS_package} set \(B=5000\) as the default.
	We tried this option, too, but the computational cost was too high, so we opted for a lower, but still reliable number of Boostrap iterations.
	
	In the above implementation of the MCS algorithm, after a model is eliminated, it is totally discarded in future time points.
	That is, when we set \(t\leftarrow t+1\) and move back to step 1, we only repeat the process for the models in \(\widehat{\M}_t\), and not for all models in \(\M_0\).
	For anytime-valid methods, this is not possible, as illustrated by Equation \eqref{eq:adjusted_e-processes}, where \(m\) stays constant over time.
	
	We implemented the sequential version of the MCS procedure like this, since it leads to lower computational cost.
	As explained earlier, the MCS algorithm involves the Bootstrap estimation \(\widehat{\text{var}}\left(\overline{d}_{ij}\right)\) after \emph{every single model rejection}.
	This is already expensive computationally, and the cost would be higher if we restarted with the full set of models at every new time point.
	
	However, this variation has no impact on the rate of coverage of the optimal model: since we are testing the strong hypothesis, one rejection of this model is enough to exclude it from future confidence sets.
	Even though the losses associated with this model would otherwise be used in future time steps, the model itself would not reenter the model confidence set.
	
	However, the size of the model confidence set could be affected.
	After the removal of a number of models, the definition of \(\overline{d}_{ij}\) changes, and the Bootstrap estimation of the variance is based only on the surviving models.
	Even in this case, the size of the MCS would be pushed down, because it would be easier for sub-optimal models \(i\in \M_0\) to get excluded in future steps.
	This happens because the excluded models \(j\) that are likely worse than \(i\) would otherwise push the quantiles of the distribution of \(\overline{d}_{ij}\) upwards.
	Now that they are not taken into account, performance differences between \(i\) and better-performing models are going to become more pronounced.
	
	As in Figure \ref{fig:Simulations1&2_results}, we track the average size of the confidence set and the rate of coverage of the optimal forecast across a period of \(T = 1000\) time points.
	The averages are computed over \(N=1000\) Monte Carlo iterations and the results are shown in Figure \ref{fig:mcs_vs_smcs}.
	
	As expected, the original MCS has very poor sequential control of the Type I error. In contrast, the anytime-valid SMCSs contains the superior model across the entire time domain.
	Our SMCS are uniformly larger than the (boosted) MCS run sequentially over time. However, this comes at the cost of very poor coverage rates.

	\begin{figure}[H]
		\centering
		\includegraphics[scale = 0.45]{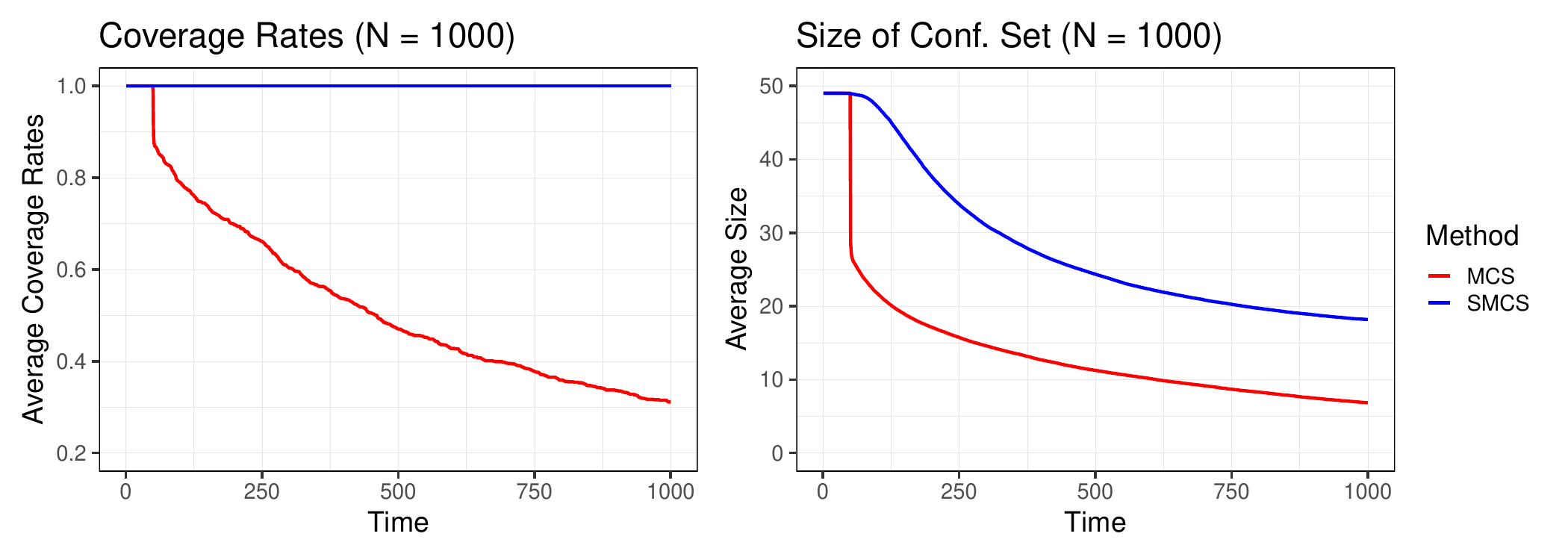}
		\caption{Comparison of the original MCS with SMCS. 
			Left: Coverage rates; Right: Average size of MCS and SMCS.}
		\label{fig:mcs_vs_smcs}
	\end{figure}
	
	\section{Accelerated algorithm for e-value adjustment}\label{app:algo1}
	
	In this section, we present an accelerated version of the e-value adjustment algorithm of \cite{Vovk_Wang_2021}.
	The idea and its rigorous formulation is due to Tim Stephan, a student seminar participant at ETH Z\"urich.
	
	The complexity of the original algorithm \cite[Algorithm~1]{Vovk_Wang_2021} is \(O(m^2)\), where \(m\) is the number of e-values used as an input.
	A quick look at this algorithm (Algorithm \ref{alg: evalue_adjustment}) reveals that the quadratic complexity comes from the nested loop, which runs a total of \(m^2\) iterations.
	
	Observe that the inner loop does not need to run all the way from \(1\) to \(m\).
	Due to the sorting step, the minimum term is attained for some value \(i\in \{0,\ldots,k-1\}\).
	For larger indices, the terms that are added to the numerator are larger than \(e_{\pi(k)}\), so the overall fraction takes higher values.
	This observation can reduce the number of iterations to \(1 + \ldots + m-1\), which is however still quadratic in \(m\).
	
	\begin{algorithm}
		\caption{E-Value Adjustment Algorithm \citep{Vovk_Wang_2021}}
		\label{alg: evalue_adjustment}
		\begin{algorithmic}[1]
			\Require A sequence of e-values \(e_1,\ldots,e_m\).
			\State Find a permutation \(\pi:\{1,\ldots,m\}\to \{1,\ldots,m\}\) such that \(e_{\pi(1)}\leq \ldots e_{\pi(m)}\).
			\State Set \(S_0\vcentcolon = 0\).
			\For{\(i = 1, \ldots, m\)}
			\State \(S_i\vcentcolon = S_{i-1}+e_{\pi(i)}\).
			\EndFor
			\For{\(k=1,\ldots,m\)}
			\State \(e_{\pi(k)}^\star \vcentcolon = e_{\pi(k)}\).
			\For{\(i=1,\ldots,m\)}
			\State \(e\vcentcolon = \frac{e_{\pi(k)}+S_i}{i+1}\)
			\If{\(e<e_{\pi(k)}^\star\)}
			\State \(e_{\pi(k)}^\star \vcentcolon = e\).
			\EndIf
			\EndFor
			\EndFor
			\Ensure Adjusted e-values \(e_1^\star, \ldots,e_m^\star\).
		\end{algorithmic}
	\end{algorithm}
	
	It turns out that we can actually take advantage of the fact that the e-values are first being sorted to bring the number of iterations down to a linear level.
	In that case, the dominating part will be the ordering of the e-values, whose complexity is generally \(O(m\log m)\).
	The accelerated e-value adjustment algorithm relies on the following lemma:
	
	\begin{lem}
		\label{lem: bell_shape}
		Let \(a_1\leq \ldots\leq a_m\) be positive real numbers.
		Set \(S_0 = 0\) and \(S_i=\sum_{k=1}^i a_k\) for \(i=1,\ldots,m\).
		For all \(k\in \{1,\ldots,m\}\), define
		\begin{equation*}
			Q_{k,i}=\frac{a_k + S_i}{i+1},\quad i=0,\ldots,k-1.
		\end{equation*}
		Then:
		\begin{enumerate}
			\item For each \(k\in \{1,\ldots,m\}\), there exists an index \(i_k\in \{1,\ldots,k-1\}\) such that \(Q_{k,i-1}\geq Q_{k,i}\) for all \(i\leq i_k\) and \(Q_{k,i}< Q_{k,i+1}\) for \(i\geq i_k\).
			By convention, we choose \(i_k\) to be the largest index with this property.
			\item For all \(k\in \{1,\ldots,m-1\}\), it holds that \(i_k\leq i_{k+1}\).
		\end{enumerate}
	\end{lem}
	
	\begin{proof}
		The proof is based on the two following straightforward claims.
		Let \(x_1,\ldots,x_m,y\) be arbitrary real numbers and let \(\overline{x}_{1:m}\) denote the arithmetic mean of \(x_1,\ldots,x_m\).
		Then,
		\begin{itemize}
			\item If \(y\geq \overline{x}_{1:m}\), then
			\begin{equation*}
				y\geq \frac{x_1+\ldots+x_m+y}{m+1}\geq \overline{x}_{1:m},
			\end{equation*}
			with equality if and only if \(y= \overline{x}_{1:m}\).
			\item If \(y\leq \overline{x}_{1:m}\), then
			\begin{equation*}
				y\leq \frac{x_1+\ldots+x_m+y}{m+1}\leq \overline{x}_{1:m},
			\end{equation*}
			with equality iff \(y= \overline{x}_{1:m}\).
		\end{itemize}
		We now prove the two parts of the lemma separately.
		\begin{enumerate}
			\item Starting with \(Q_{k,1}\), and since \(a_i\leq a_{i+1}\) for all \(i\in \{1,\ldots,m-1\}\), the sequence \(Q_{k,1},Q_{k,2},\ldots\) is decreasing until we hit an index \(i\in \{1,\ldots,k-1\}\) such that \(a_{i+1}> Q_{k,i}\).
			We call this index \(i_k\).
			If no such index exists, we set \(i_k=k-1\).
			We now need to show that the sequence \(Q_{k,i},Q_{k,i+1},\ldots\) is strictly increasing for \(i\geq i_k\).
			If \(i_k= k-1\), there is nothing to show because \(Q_{k,i}\) is not defined for \(i\geq k\).
			If \(i_k<k-1\), then \(a_{i_k+1}>Q_{k,i_k}\), yields that \(a_{i_k+1}>Q_{k,i_k+1}\).
			The inequality \(Q_{k,i_k} < a_{i_k+1}\) shows that \(Q_{k,i_k}<Q_{k,i_k+1}<a_{i_k+1}\leq a_{i_k+2}\).
			In turn, this yields \(Q_{k,i_k+1}<Q_{k,i_k+2}<a_{i_k+2}\leq a_{i_k+3}\), so the claim follows by induction.
			\item Fix \(k\in \{1,\ldots,m-1\}\).
			If \(i_k=1\), there is nothing to show.
			By definition, it holds that \(Q_{k,i_k-1}\geq Q_{k,i_k}\).
			This implies that \(a_{i_k}\leq Q_{k,i_k-1}\) (this follows by the two facts in the beginning of the proof).
			On the other hand, it holds that \(a_j\leq a_{j+1}\) for all indices \(j\in [k-1]\), so \(a_{i_k}\leq Q_{k,i_k-1}\leq Q_{k+1,i_k-1}\).
			This yields \(Q_{k+1,i_k-1}\geq Q_{k+1,i_k}\), so the turning point of the sequence \(Q_{k+1,1},Q_{k+1,2},\ldots\) is at least as large as \(i_k\).
		\end{enumerate}
	\end{proof}
	
	This lemma reveals that the inner loop needs to run for even fewer iterations.
	Once we reach the value \(i_k\), we can update the value of \(k\) and start the index \(i\) from \(i_k\) instead of starting from \(1\).
	We will update the value of \(k\) again when we reach the point where the sequence \(Q_{k,i}\) becomes strictly increasing.
	This yields a total of \((i_{k}+1)-(i_{k-1}-1)\) iterations for each value of \(k\in \{2,\ldots,m\}\) and a total of \(i_1+1\) iterations for \(k=1\).
	Therefore, the total number of iterations of the nested loop is
	\begin{equation*}
		(i_1+1) + (i_2 + 1 - i_1 +1) + \ldots (i_m + 1-i_{m-1}+1)=i_m+2m-1<3m,
	\end{equation*}
	which is linear in \(m\).
	This idea is summarized in Algorithm \ref{alg: accelerated_adjustment}, which is an accelerated version of Algorithm \ref{alg: evalue_adjustment}.
	
	\begin{algorithm}[H]
		\caption{Accelerated E-Value Calibration}
		\label{alg: accelerated_adjustment}
		\begin{algorithmic}[1]
			\Require A sequence of e-values \(e_1,\ldots,e_m\).
			\State Find a permutation \(\pi:\{1,\ldots,m\}\to \{1,\ldots,m\}\) such that \(e_{\pi(1)}\leq \ldots e_{\pi(m)}\).
			\State Set \(S_0\vcentcolon = 0\).
			\For{\(i = 1, \ldots, m\)}
			\State \(S_i\vcentcolon = S_{i-1}+e_{\pi(i)}\).
			\EndFor
			\State \(i\leftarrow 1\)
			\State \(e_{\pi(1)}^\star \leftarrow e_{\pi(1)}\)
			\For{\(2\leq k\leq m\)}
			\State \(e_{\pi(k)}^\star \leftarrow \frac{e_{\pi(k)}+S_i}{i+1}\).
			\While{\(i<k-1\)}
			\State \(i\leftarrow i+1\).
			\State \(e_{\text{temp}}\leftarrow \frac{e_{\pi(k)}+S_i}{i+1}\).
			\If{\(e_{\text{temp}}\leq e_{\pi(k)}^\star\)}
			\State \(e_{\pi(k)}^\star\leftarrow e_{\text{temp}}\).
			\Else
			\State \(i\leftarrow i-1\).
			\State \textbf{break}
			\EndIf
			\EndWhile
			\EndFor
			\Ensure Adjusted values \(e_1^\star, \ldots,e_m^\star\).
		\end{algorithmic}
	\end{algorithm}
	
\end{document}